\documentclass[english,11pt,a4paper,final,reqno]{smfart}

\usepackage{smfthm}

\usepackage{lmodern}
\usepackage[utf8]{inputenc}
\usepackage[T1]{fontenc}
\usepackage[english]{babel}
\usepackage[babel]{csquotes}
\usepackage{graphicx}
\usepackage{xspace}
\usepackage{multirow}
\usepackage{multirow}
\usepackage{tikz}
\usetikzlibrary{arrows,calc,shapes}
\usetikzlibrary{patterns}
\usetikzlibrary{matrix}
\usepackage{enumerate}

\usepackage[a4paper, margin=3cm]{geometry}


\setcounter{secnumdepth}{4}
\setcounter{tocdepth}{4}


\newcommand{\indic}[1]{\mathrm{\textbf{1}}_{#1}}



\newcommand{\setN}{\mathbb{N}}
\newcommand{\setZ}{\mathbb{Z}}
\newcommand{\setK}{\mathbb{K}}
\newcommand{\setR}{\mathbb{R}}
\newcommand{\setQ}{\mathbb{Q}}

\newcommand{\Sym}[1]{\mathfrak{S}_{#1}}

\newcommand{\ca}[1]{\mathcal{#1}}
\newcommand{\ov}[1]{\overline{#1}}

\newcommand{\ti}[1]{\widetilde{#1}}
\newcommand{\gr}[1]{\mathbf{#1}}


\DeclareMathOperator{\Orb}{Orb}

\DeclareMathOperator{\id}{id}

\newcommand{\timesdots}{\times\ldots\times}





\usepackage{hyperref}

\usetikzlibrary{patterns}
\usetikzlibrary{matrix}
\usepackage{mathtools}

\usepackage{esint}

\usepackage{algorithm}
\usepackage{algpseudocode}
\algrenewcommand\algorithmicrequire{\textbf{Input}}
\algrenewcommand\algorithmicensure{\textbf{Output}}

\newcommand{\mrb}{mixed radix basis\xspace}
\newcommand{\mrbs}{mixed radix bases\xspace}

\newcommand{\Mrbs}{Mixed radix bases\xspace}

\newcommand{\South}{\mathsf{S}}
\newcommand{\North}{\mathsf{N}}
\newcommand{\West}{\mathsf{W}}
\newcommand{\East}{\mathsf{E}}

\newcommand{\MarkovWeight}[1]{\mathsf{#1}}

\tikzstyle{guillpart} = [scale=0.5,baseline={(current bounding box.center)}]
\tikzstyle{guillfill}=[gray,opacity=0.5]
\tikzstyle{guillsep} = [ultra thick]
\tikzstyle{centerline} = [baseline={(current bounding box.center)}]

\begin{document}
	
	\title[Mixed radix bases: Horner, Yang, Baxter and Furstenberg]{Mixed radix numeration bases: Horner's rule, Yang-Baxter equation and Furstenberg's conjecture.}
	\author{Damien \textsc{Simon}}
	\address{Sorbonne Université and Université Paris Cité, CNRS, Laboratoire de Probabilités, Statistique et Modélisation, 4, place Jussieu, F-75005 Paris, France}
	\email{damien.simon@sorbonne-universite.fr}
	\date{2025/03/27}
	
	\keywords{Numeration basis, Yang-Baxter equation}

\begin{abstract}
	Mixed radix bases in numeration is a very old notion but it is rarely studied on its own or in relation with concrete problems related to number theory. Starting from the natural question of the conversion of a basis to another for integers as well as polynomials, we use mixed radix bases to introduce two-dimensional arrays with suitable filling rules. These arrays provide algorithms of conversion which use only a finite number of Euclidean division to convert from one basis to another; it is interesting to note that these algorithms are generalizations of the well-known Horner's rule of quick evaluation of polynomials. The two-dimensional arrays with local transformations are reminiscent of statistical mechanics models: we show that changes between three numeration bases are related to the set-theoretical Yang-Baxter equation and this is, up to our knowledge, the first time that such a structure is described in number theory. As an illustration, we reinterpret well-known results around Furstenberg's conjecture in terms of Yang-Baxter transformations between mixed radix bases, hence opening the way to alternative approaches.
\end{abstract}

\maketitle

\section{Introduction}

\subsection{The background}
The purpose of the present paper is to exhibit a new link between two different fields: number theory and exactly solvable models in statistical mechanics. We believe that the present construction may produce new results in number theory. Indeed, the major difficulty in number theory is that the quantity of available tools is rather small due to the very elementary structure of $\setZ$. On the algebraic side, there is the construction of ideals, which establishes a similarity with function fields: it gives rise to the finite fields $\setZ/p\setZ$ when $p$ is prime and to the construction of $p$-adic spaces such as $\setZ_p$ or $\setQ_p$. In many situations, the prime number $p$ is generic but fixed and the structure of the previous structures is studied in detail. There also exist situations in which one considers all the possible $p$ simultaneously such as Berkovich spaces or the use of Hasse principle to solve Diophantine equations. The present work falls within this latter case: instead of decomposing integers in a given basis, we will rather consider how decompositions of numbers in various bases interact with each other. When considering simultaneous decompositions of integers in various constant numeration bases, let say numeration in constant bases $p$ and $q$, the surprise described in the present paper is that the formulae of change of bases become related to a well-known tool of exactly solvable models in statistical mechanics: the Yang-Baxter equation. 

\subsection{A two-dimensional view on \mrbs}
To achieve this relation, we need to lift classical constant radix numeration bases, i.e. decomposition of numbers as $\sum_{i} a_i r^i$ with $0\leq a_i<r$, to a wider class of numeration bases called \mrbs, which are as old as numeration systems and whose most prominent actual example is the second-minute-hour-day unit system of time. The major introduction of the present paper presented in section~\ref{sec:mixedradix} is to decompose any change from base $p$ to base $q$ into elementary "local" changes of bases that relate neighbouring \mrbs. As such, it does not use any highly-technical constructions but the new part is that the collection of these elementary changes of bases exhibits a two-dimensional lattice structure with the Markov property which has a interpretation as a two-dimensional model of statistical mechanics.

A constant numeration basis in base $p$ has a natural one-dimensional interpretation as a sequence of digits indexed by $\setN$. While segments correspond to integer numbers, completions to half-lines correspond to $p$-adic numbers. Using \mrbs interpolating between numeration in bases $p$ and $q$, we obtain two-dimensional representations of integer numbers with boundaries corresponding to the two initial bases. A \mrb then corresponds to a broken line in such a rectangle. The present paper describes the construction in details, with various examples. In particular, this leads to nice algorithmic properties described in section~\ref{sec:algo}. From this point of view, the new algorithm require only a finite number of divisions and remainders and replace most equations by a (easy) filling of a domain by tiles.

The present paper focuses on the integers but the same construction also holds for polynomial ring $\setK[X]$ without surprise. We also describe the formalism of \mrbs for polynomials. The algorithm designed in section~\ref{sec:algo} is also written for polynomials and in the easiest case can be identified with Horner algorithm of quick evaluation of a polynomial. An interesting by-product of the construction is the obtention of a variant of Horner's rule \cite{enwiki:1222412612} of quick simultaneous evaluation of a polynomial and its derivatives on some element.

\subsection{Towards Yang-Baxter equation and beyond}

Once the correspondence with two-dimensional statistical mechanics is established for two constant $p$ and $q$ numeration bases, the following new step made in section~\ref{sec:YB} consists in the observation that the elementary weights of the configurations of the models, i.e. the digits in suitable bases, satisfy the so-called Yang-Baxter equation. This corresponds to an associativity in three changes of numeration bases between bases $p$, $q$ and $r$. We have not been able to find any trace of this observation, at least in this language, in the literature despite of its simplicity. The Yang-Baxter equation in statistical mechanics has a long story, starting in some hidden way in Onsager's solution of the Ising model and culminating in Baxter's solution of the eight-vertex model (see \cite{BaxterBook} for more details) and is the basic brick on which rely almost all exact formulae in integrable models of statistical mechanics such as the six-vertex model for example. Discovering the presence of this Yang-Baxter equation in number theory may explain some existing exact formulae or produce new ones, as it is the case in statistical mechanics, even if, for us, it is too early to have a clear insight beyond this analogy. The Yang-Baxter equation also has numerous ramifications in various part of algebra (construction à la Drinfeld of Hopf algebras \cite{Drinfeld}, Poisson-Lie groups, Kac-Moody algebras) and it would be interesting to see in later works if the Yang-Baxter family constructed here can also be obtained by other means.

The Yang-Baxter equation in dimension two can be seen as the projection of a higher three-dimensional cube relation. This is indeed the case here: three constant bases $p$, $q$ and $r$ give rise to a three-dimensional lattice where oriented broken lines are related to mixed-radix bases. We see in section~\ref{sec:YB} that this results can even be pushed even further: any sequence of constant bases $(p_i)$ with a finite sequence of $d$ distinct integers $p_i$ can be interpreted as a $d$-dimensional hyper-cubic lattice with suitable mixed radix bases on oriented broken lines.

\subsection{An application in dynamical systems: the Furstenberg conjecture}

In order to emphasize the role that \mrbs may play in practice, we reformulate in section~\ref{sec:examples} results related to Furstenberg's $2x$-$3x$ conjecture \cite{Furstenberg_1967} in terms of Yang-Baxter equation for \mrbs and some more recent works on two-dimensional discrete Markov processes in statistical mechanics. The measures on $\setR/\setZ$ that are invariant under both maps $x\mapsto 2x$ and $x\mapsto 3x$ are conjectured to be either the Lebesgue measure or finite-support atomic measures. We present here how these two families of measures are related with \mrbs and the set-theoretic Yang-Baxter equations through a multi-layer construction. We also describe how some potential improvements can be imagined from this construction.

\section{\Mrbs and change of bases for integers and polynomials}
\label{sec:mixedradix}

\subsection{Definitions and examples}

\begin{defi}\label{def:mixedradix}
A \mrb on $\setN$ is a sequence $\gr{b}=(b_i)_{i\geq 1}$ of integer numbers strictly larger than $1$. We define the product sequence $\pi^\gr{b}=(\pi^\gr{b}_i)_{i\geq 0}$ as $\pi^\gr{b}_0=1$ and $\pi^\gr{b}_i= \prod_{j=1}^i b_i$ for $i\geq 1$.
\end{defi}
An alternative definition may consider first the sequence $(\pi_i)_{i\geq 0}$ such that $\pi_0=1$ and, for all $i\geq 0$, $\pi_i$ divides $\pi_{i+1}$; in this case, the sequence $b_i$ is defined as $b_i=\pi_{i}/\pi_{i-1}$.

A constant radix $p$ basis corresponds to the choice of a constant sequence $b_i=p$ and a geometric sequence $\pi_i=p^i$. 

We also consider a \emph{finite} \mrb as a finite sequence $\gr{b}=(b_i)_{1\leq i\leq K}$.

\begin{prop}\label{prop:mixedradix}
Let $\gr{b}$ be a \mrb. Then, for any integer $n\in\setN$, there exists a unique sequence $(a_i)_{i\geq 0}$ such that
\begin{enumerate}[(i)]
	\item for all $i\geq 0$, $0\leq a_i< b_{i+1}$;
	\item there exists $k_n\in\setN$ such that $a_i=0$ for $i\geq k_n$;
	\item the integer $n$ can be written as
	\begin{equation}\label{eq:mixedradixrep}
		n = \sum_{i=0}^{\infty} a_i \pi^{\gr{b}}_i = a_0 + a_1 b_1 +a_2 b_1b_2+\ldots + a_k b_1\ldots b_k +\ldots
	\end{equation}
\end{enumerate}
The sequence $(a_i)$ is called the \emph{sequence of digits} of $n$. We now recall quickly the fundamental decomposition of a number in a given basis. 
\end{prop}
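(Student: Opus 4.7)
The plan is to establish existence by iterated Euclidean division and uniqueness by a telescoping bound combined with reduction modulo $b_1$.

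For existence, I would define inductively $q_0 = n$ and, for each $i \geq 0$, the Euclidean division of $q_i$ by $b_{i+1}$:
$$q_i = b_{i+1}\, q_{i+1} + a_i, \qquad 0 \leq a_i < b_{i+1}.$$
Since $b_{i+1} \geq 2$, the sequence $(q_i)$ is a non-increasing sequence of non-negative integers, strictly decreasing whenever $q_i \geq 1$. Hence there exists $k_n \in \setN$ with $q_i = 0$ (and therefore $a_i = 0$) for all $i \geq k_n$, which provides condition (ii). A telescoping substitution
$$n = q_0 = a_0 + b_1 q_1 = a_0 + b_1 a_1 + b_1 b_2 q_2 = \cdots = \sum_{i=0}^{k_n - 1} a_i\, \pi^{\gr{b}}_i$$
yields (iii), while (i) holds by construction of the Euclidean division.

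For uniqueness, the key elementary identity is the telescoping
$$(b_{i+1} - 1)\,\pi^{\gr{b}}_i = \pi^{\gr{b}}_{i+1} - \pi^{\gr{b}}_i,$$
which shows that any finitely supported sequence satisfying (i) with $a_j = 0$ for $j \geq k$ yields $\sum_j a_j\, \pi^{\gr{b}}_j \leq \pi^{\gr{b}}_k - 1 < \pi^{\gr{b}}_k$. Given two such representations of the same $n$, reducing \eqref{eq:mixedradixrep} modulo $b_1$ and using that $\pi^{\gr{b}}_i$ is a multiple of $b_1$ for $i \geq 1$ forces $a_0 \equiv a'_0 \pmod{b_1}$, hence $a_0 = a'_0$ since both lie in $[0, b_1)$. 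Subtracting $a_0$ and dividing by $b_1$ reduces the problem to the same statement for the shifted \mrb $\gr{b}' = (b_{i+1})_{i \geq 1}$ applied to the strictly smaller integer $(n - a_0)/b_1$. A strong induction on $n$ then concludes.

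The argument is essentially routine once the telescoping identity is recorded; I do not anticipate any real obstacle beyond being careful with the index shifts between $\gr{b} = (b_i)_{i \geq 1}$ and $\pi^{\gr{b}} = (\pi^{\gr{b}}_i)_{i \geq 0}$, and ensuring that condition (ii) (eventual vanishing of digits) follows from the strict decrease of the quotients rather than having to be imposed by hand.
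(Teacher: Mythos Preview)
The paper does not actually supply a proof of this proposition: it is stated as a standard fact about mixed radix representations, and the closing sentence (``We now recall quickly the fundamental decomposition of a number in a given basis'') signals that the author treats it as folklore. Your argument is correct and is precisely the standard one---iterated Euclidean division for existence, then reduction modulo $b_1$ combined with the telescoping bound $\sum_{j<k}(b_{j+1}-1)\pi^{\gr{b}}_j = \pi^{\gr{b}}_k - 1$ and induction on $n$ for uniqueness---so there is nothing to compare.
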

We will often need the following result about finite \mrbs in the next sections.
\begin{prop}\label{prop:mixedradix:finite}
	Let $K\in\setN$. Let $\gr{b}=(b_i)_{1\leq i\leq K}$ be a \mrb.  Then, for any integer $0\leq n<\pi^\gr{b}_K$, there exists a unique sequence $(a_i)_{0\leq i<K}$ such that
	\begin{enumerate}[(i)]
		\item for all $0\leq i<K$, $0\leq a_i< b_{i+1}$;
		\item the integer $n$ can be written as
		\begin{equation}\label{eq:mixedradixrep:finite}
			n = \sum_{i=0}^{K-1} a_i \pi^{\gr{b}}_i 
		\end{equation}
	\end{enumerate}
	In particular, this induces a bijection $\{0,\ldots,\pi_K-1\}\to \prod_{i=0}^{K-1} \{0,\ldots,b_{i+1}-1\}$ which associates to $n$ its sequence of digits.
\end{prop}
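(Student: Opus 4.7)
The plan is to establish the bijection claim directly, which will yield both existence and uniqueness at once. I would define the evaluation map
\[
\Phi: \prod_{i=0}^{K-1} \{0,\ldots,b_{i+1}-1\} \longrightarrow \setN, \qquad (a_0,\ldots,a_{K-1}) \longmapsto \sum_{i=0}^{K-1} a_i \pi^\gr{b}_i,
\]
and show that it actually lands in $\{0,\ldots,\pi^\gr{b}_K-1\}$ and is a bijection onto this set. The two sides will then have the same cardinality $\prod_{i=1}^{K} b_i = \pi^\gr{b}_K$, so injectivity will suffice to conclude.

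First I would verify that the image of $\Phi$ is contained in $\{0,\ldots,\pi^\gr{b}_K-1\}$ using the key telescoping identity $(b_{i+1}-1)\pi^\gr{b}_i = \pi^\gr{b}_{i+1}-\pi^\gr{b}_i$, which gives the upper bound
\[
\sum_{i=0}^{K-1} a_i\pi^\gr{b}_i \le \sum_{i=0}^{K-1}(b_{i+1}-1)\pi^\gr{b}_i = \sum_{i=0}^{K-1}\bigl(\pi^\gr{b}_{i+1}-\pi^\gr{b}_i\bigr) = \pi^\gr{b}_K - 1,
\]
together with the obvious lower bound $0$. For injectivity I would argue by induction on $K$: if $\Phi(a) = \Phi(a')$, then reducing modulo $b_1$ and using that $\pi^\gr{b}_i$ is a multiple of $b_1$ for $i \ge 1$ while $0\le a_0,a_0'<b_1$, one obtains $a_0=a_0'$; subtracting and dividing by $b_1$ reduces the equality to one for the shorter mixed radix basis $(b_2,\ldots,b_K)$, to which the induction hypothesis applies. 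The base case $K=1$ is trivial since $\pi^\gr{b}_0=1$ and $0\le a_0<b_1$ forces $a_0=n$.

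Alternatively, one could construct the digits explicitly via successive Euclidean divisions: set $a_0$ to be the remainder of $n$ modulo $b_1$ and $n_1=(n-a_0)/b_1$, which satisfies $0\le n_1<\pi^\gr{b}_K/b_1 = b_2\cdots b_K$, then iterate with the basis $(b_2,\ldots,b_K)$. This gives existence directly and uniqueness follows as above. There is no real obstacle here; the only point requiring care is the telescoping bound on the sum, which justifies why the map actually surjects onto a set of the expected size. Once that is in place, the statement is a clean pigeonhole-type argument, and the bijection between $\{0,\ldots,\pi^\gr{b}_K-1\}$ and $\prod_{i=0}^{K-1}\{0,\ldots,b_{i+1}-1\}$ claimed at the end of the proposition is immediate.
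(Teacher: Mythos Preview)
Your argument is correct and complete. The paper itself does not supply a proof of this proposition (nor of the preceding Proposition~\ref{prop:mixedradix}), treating the result as classical; your telescoping bound for the range together with the induction on $K$ via reduction modulo $b_1$ is exactly the standard argument one would give, and your alternative via successive Euclidean divisions is the constructive version the paper alludes to elsewhere when it speaks of obtaining digits by ``successive Euclidean division''.
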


The \mrbs are also called Cantor bases since they have been introduced formally by Cantor in 1869 \cite{Cantor1869} although some particular examples have been used by the Babylonians.

In all the \mrbs, addition and multiplication can be performed using the standard algorithm of addition or multiplication of digits with carries. Truncation of digit decompositions provides the standard quotient operations $\setZ \to \setZ/\pi^{\gr{b}}_k \setZ$ with the classical projective structure between these rings.

The most classical cases of numeration bases are the following ones:
\begin{itemize}
	\item the constant radix $p$ basis for any integer $p$ strictly larger than $1$: it corresponds to the case of a constant sequence $b_i=p$ and $\pi^\gr{b}_i=p^i$. Our numerical system is based on $p=10$ and computer arithmetic on $p=2$.
	
	\item the factorial basis: it corresponds to $b_i=i+1$ and $\pi^{\gr{b}}_i=(i+1)!$ and it induces a bijection called the Lehmer code between any integer $n\in\{0,1,\ldots,N!-1\}$ with the symmetric group $\Sym{N}$.
	
	\item the time basis with $b_1=b_2=60$ (numbers of seconds in one minute and number of minutes in one hour), $b_3=24$ (numbers of hours in a day), $b_4=7$ (number of days in one week).
\end{itemize}
These \mrbs considered alone appear as interesting curiosities. We will now consider the more interesting case of a  change of basis.

\subsection{Change of bases} 
\subsubsection{The naive picture}\label{sec:naivepic}
Given a given number $n$ and a constant radix $p$ basis, the digit decomposition can be obtained by performing successive Euclidean division by $p$ on the successive quotient while keeping the remainders as the wanted digits. A change of constant radix bases from $p$ to $q$ can then be naively performed in two steps: starting from a sequence of digits in the constant radix $p$ basis, one first compute the integer $n$ using \eqref{eq:mixedradixrep} and then the Euclidean division algorithm by $q$ started on this $n$ provides the digits in a second constant radix $q$ basis. However, such an algorithm uses a large number of multiplications and Euclidean divisions that are costly in practice and quickly requires the use of very large numbers $n$.

A second idea is to use a Horner-like algorithm (see below for the original Horner algorithm for polynomials) to obtain the first digits in a constant radix $q$ basis from the ones in the constant radix $p$ basis. Evaluation of $n$ from the digits in \eqref{eq:mixedradixrep} can be done using few products by using the following recursion: one first considers the digit $a_k$ with largest position, i.e. the last non-zero digit and define the following sequence $(m_i)_{0\leq i\leq k}$ defined by $m_k=a_k$ and $m_{i}=m_{i+1}p+a_i$ for $i<k$; the number $n$ then corresponds to $m_0$. The obtention of the first digit $a'_0$ in constant radix $q$ basis can be obtained by using the same recursion on $\setZ/q\setZ$ instead of $\setZ$. Working instead in $\setZ/q^k\setZ$ provides a number which can be further decomposed to obtain the first $k$ digits. This last classical algorithm of computation of $a'_0$ requires only a finite number of addition and multiplication to establish in $\setZ/q\setZ$ the table of the map $(m,a)\mapsto mp+a$, which is then used at each step.

The idea of the paper is to provide interesting interpretations of all the intermediate computations steps $(m_i)$ and to enhance them to a higher level.

\subsubsection{An elementary change of \mrb and a fundamental lemma.}

Let $\gr{b}$ be a \mrb (resp. finite \mrb of length $K$). For any $N\in\setN$ (resp. for $N=K$), the symmetric group $\Sym{N}$ acts on the set of \mrbs through \begin{equation}\label{eq:actionSymmetricGroup}
	\sigma(\gr{b})_i = b_{\sigma^{-1}(i)}
\end{equation}
for any $\sigma\in\Sym{N}$. The inductive structure of the symmetric groups is compatible with this action and induce an action of the whole group $\Sym{\infty}=\varinjlim \Sym{N}$ but we will not need this fact; it only makes easier some formulations.

A first interesting fact is that this action leaves invariant any constant radix $p$ basis and this may explain why the present construction has not been described before. It then becomes interesting only for non-trivial \mrbs and we now state the fundamental lemma of the present paper. To this end, we introduce the fundamental functions $\psi_{p,q}$ defined by:
\begin{equation}\label{eq:def:psi}
	\begin{split}
		\psi_{p,q}  : \{0,1,\ldots,p-1\}\times\{0,1,\ldots,q-1\} & \to \{0,1,\ldots,q-1\}\times\{0,1,\ldots,p-1\}
		\\
		(u,v) &\mapsto (\psi^{(1)}_{p,q}(u,v),\psi^{(2)}_{p,q}(u,v)) 
	\end{split}
\end{equation}
where the two elements $v'=\psi^{(1)}_{p,q}(u,v)$ and $u'=\psi^{(2)}_{p,q}(u,v)$ are the unique integers such that
\[
 u+pv = v'+qu'
\] 
One trivially observes that $\psi_{p,q}\circ\psi_{q,p}=\id$ and $\psi_{p,p}(u,v)=(v,u)$.

This can also be interpreted as a simple case of change of \mrb: starting from $(u,v)$, one computes the integer $n=u+pv\in \{0,1,\ldots,pq-1\}$ in the finite \mrb $(b_1,b_2)=(p,q)$ and then decompose $n=v'+qu'$ in the transposed finite \mrb $(b'_1,b'_2)=(q,p)$. The fundamental lemma below generalizes this fact to arbitrary \mrbs and arbitrary transpositions in a "local" way.

\begin{lemm}[change of basis under transposition for numbers]\label{lemm:fundanumbers}
	Let $N\in\setN\cup\{\infty\}$ and let $\gr{b}$ be a \mrb of length $N$. Let $0\leq k<N-1$ and let $\tau_{k,k+1}$ the transposition of the two elements $k$ and $k+1$. For any $0\leq n <\pi_{N}$ (with the convention $\pi_\infty=\infty$), the decomposition $(a'_i)_{0\leq i<N}$ of $n$ in the \mrb $\tau_{k,k+1}(\gr{b})$ is obtained from the decomposition $(a_i)_{0\leq i<N}$ in the \mrb $\gr{b}$ through, for all $i$,
	\begin{equation}
		a'_i = \begin{cases}
			   \psi_{b_k,b_{k+1}}^{(1)}(a_{k-1},a_k) & \text{if $i=k-1$}
			   \\
			   \psi_{b_k,b_{k+1}}^{(2)}(a_{k-1},a_k) & \text{if $i=k$}
			   \\
			   a_i & \text{else}
			\end{cases}
	\end{equation} 
\end{lemm}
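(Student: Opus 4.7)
The plan is to prove the lemma by (i) checking that the digits $(a'_i)$ defined in the statement lie in the correct ranges for the new mixed radix basis $\tau_{k,k+1}(\gr{b})$, (ii) verifying the identity $\sum_i a'_i \pi^{\tau_{k,k+1}(\gr{b})}_i = \sum_i a_i \pi^{\gr{b}}_i = n$, and (iii) invoking the uniqueness part of Proposition~\ref{prop:mixedradix} (or of Proposition~\ref{prop:mixedradix:finite} in the finite case) to conclude that $(a'_i)$ is indeed the digit sequence of $n$ in the transposed basis. The key point is that the transposition is ``local'', so the entire argument reduces to a single identity in positions $k-1$ and $k$.

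First I would set $p=b_k$ and $q=b_{k+1}$ and unpack the action \eqref{eq:actionSymmetricGroup}: by definition $\tau_{k,k+1}(\gr{b})_i=b_i$ for $i\notin\{k,k+1\}$, $\tau_{k,k+1}(\gr{b})_k=q$ and $\tau_{k,k+1}(\gr{b})_{k+1}=p$. Since $\pi^{\gr{b}}_i$ depends only on $(b_1,\ldots,b_i)$ and the transposition only swaps two consecutive factors, one obtains $\pi^{\tau_{k,k+1}(\gr{b})}_i=\pi^{\gr{b}}_i$ for every $i\neq k$, while $\pi^{\tau_{k,k+1}(\gr{b})}_k=\pi^{\gr{b}}_{k-1}\cdot q$. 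This is the only product that changes.

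For the range check, the definition \eqref{eq:def:psi} of $\psi_{p,q}$ gives $\psi^{(1)}_{p,q}(a_{k-1},a_k)\in\{0,\ldots,q-1\}$ and $\psi^{(2)}_{p,q}(a_{k-1},a_k)\in\{0,\ldots,p-1\}$, which are exactly the ranges required for $a'_{k-1}$ and $a'_k$ with respect to $\tau_{k,k+1}(\gr{b})$; for $i\notin\{k-1,k\}$, $a'_i=a_i$ sits in the correct range since the basis is unchanged at position $i+1$. For the sum identity, all terms with $i\notin\{k-1,k\}$ coincide on both sides, so the whole equality reduces to
\begin{equation*}
a_{k-1}\pi^{\gr{b}}_{k-1}+a_k\pi^{\gr{b}}_k \;=\; a'_{k-1}\pi^{\tau_{k,k+1}(\gr{b})}_{k-1}+a'_k\pi^{\tau_{k,k+1}(\gr{b})}_k.
\end{equation*}
Factoring out $\pi^{\gr{b}}_{k-1}$ on both sides, this is equivalent to $a_{k-1}+p\,a_k=a'_{k-1}+q\,a'_k$, which is precisely the defining relation of $\psi_{p,q}$.

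The main (mild) obstacle is just bookkeeping around the shift between ``basis index'' $i$ (of $b_i$) and ``digit index'' $i$ (of $a_i$, which has range $b_{i+1}$); I would handle it by writing everything out once for positions $k-1,k,k+1$. The case $k=0$ of the statement should be read with the convention that only one term $a_0\pi^{\gr{b}}_0$ is involved on the left and one term $a'_0\pi^{\tau_{0,1}(\gr{b})}_0$ on the right (or equivalently $a_{-1}=0$, $\pi_{-1}=1$); the same computation then carries over verbatim, and a direct appeal to the uniqueness of mixed radix digit expansions finishes the proof.
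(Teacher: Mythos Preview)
Your proposal is correct and follows essentially the same approach as the paper: both arguments identify that $\pi^{\tau_{k,k+1}(\gr{b})}_i=\pi^{\gr{b}}_i$ for $i\neq k$, reduce the verification to the single identity $a_{k-1}+b_k a_k = a'_{k-1}+b_{k+1}a'_k$ coming from the definition of $\psi_{b_k,b_{k+1}}$, and conclude by uniqueness of the mixed radix decomposition. Your version is slightly more explicit about the digit-range check and the $k=0$ edge case, but the substance is the same.
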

\begin{proof}
Let $n=\sum_{0\leq i <N} a_i \pi^{\gr{b}}_i$. The series is always finite for a finite $n$ since the digits $a_i$ are zeroes for sufficiently larger $i$.  We thus have
\begin{align*}
	n = \sum_{0\leq i <k-1} a_i \pi^{\gr{b}}_i + \pi^{\gr{b}}_{k-1}(a_{k-1}+b_{k}a_k) + \sum_{k+1\leq i <N} a_i \pi^{\gr{b}}_i
\end{align*} 
with $(a_{k-1},a_{k})\in \{0,1,\ldots,b_k-1\}\times\{0,\ldots,b_{k+1}-1\}$. By construction, we have
\[
a_{k-1}+b_{k}a_k = \psi_{b_k,b_{k+1}}^{(1)}(a_{k-1},a_k) + b_{k+1} \psi_{b_k,b_{k+1}}^{(2)}(a_{k-1},a_k)
\] 
Let $\gr{b}'= \tau_{k,k+1}(\gr{b}) $. We observe that $\pi^{\gr{b}'}_{i} = \pi^\gr{b}_{i}$ if $i<k$ and $i\geq k+1$ and $\pi^{\gr{b}'}_k=\pi^{\gr{b}'}_{k-1}b_{k+1}$.
From the definition of the sequences $(a'_i)$ and $(\pi^{\gr{b}'}_i)$, we then have
\[
	n = \sum_{0\leq i <k-1} a'_i \pi^{\gr{b}'}_i + \pi^{\gr{b}'}_{k-1}(a'_{k-1}+b_{k+1}a'_k) + \sum_{k+1\leq i <N} a'_i \pi^{\gr{b}'}_i
	= \sum_{0\leq i < N} a'_i\pi^{\gr{b}'}_i
\]
and we conclude by unicity of the decomposition.
\end{proof}

The previous lemma has an interesting computational consequence since it identifies an elementary change of numeration basis in which only two digits are transformed and the other digits remain the same. The algorithmic consequences are developed in section~\ref{sec:algo}.

\subsection{The polynomial case}

A similar situation of change of \mrb occurs for polynomials. We consider here only the case of polynomials $\setK[X]$ over a field $\setK$ to make things simpler. Traditional homogeneous bases are given $((X-a)^n)_{n\in\setN}$ for any given $a\in\setK$. In this case, the coefficients in such a basis are related to derivatives through:
\begin{align*}
P &= \sum_{n\in\setN} p_n(a) (X-a)^n
\\
p_n(a) &= \frac{P^{(n)}(a)}{n!}
\end{align*}
and the change of basis is a triangular linear map on the coefficients $(p_n(a))$  obtained from Newton's binomial formula:
\[
p_k(b) = \sum_{n\geq k} \binom{n}{k} p_n(a) (b-a)^{n-k} 
\]
In particular, we have $p_0(b)=P(b)$ can be obtained using only a few multiplications through Horner's rule from $u_0$ in the backward  affine recursion ($d$ is the degree of $P$)
\begin{equation}\label{eq:hornerpoly}
\begin{cases} 
	u_d = p_d(a) & \text{(highest degree term)}
	\\
	u_{k-1} = (b-a) u_k + p_{k-1}(a) & \text{for $1\leq k\leq d$}
\end{cases}
\end{equation}
which is similar to the number case.

A \mrb for polynomials corresponds here to a sequence $\gr{a}=(a_k)_{1\leq k}$ in $\setK$ and, indeed, any polynomial can be written uniquely as
\begin{equation}\label{eq:polyn:mrbcoeffs}
P = \sum_{n\in\setN} p_n(\gr{a}) \pi_n^{\gr{a}}(X) = \sum_{n\in\setN} p_n(\gr{a})\prod_{k=1}^n (X-a_k)
\end{equation}
where the sequence $(p_n(\gr{a})$ is a sequence of complex numbers vanishing for sufficiently large $n$ and the product of monomials play the role of the sequence $(\pi_i)$ for the integers. 

We still have $p_0(\gr{a})=P(a_1)$ but the simple interpretation in terms of derivatives is lost for higher terms but is replaced by finite differences. We now introduce the finite difference operator $(\Delta_u P)(X)=(P(X)-P(u))/(X-u)$ on polynomials. If all the $(a_k)$ are distinct, it is an easy exercise to obtain the following relations:
\begin{subequations}\label{eq:coeffinterpretations}
	\begin{align}
		p_0(\gr{a}) &=	P(a_1) 
		\\
		p_1(\gr{a}) &=	\frac{P(a_2)-P(a_1)}{a_2-a_1}   
		=(\Delta_{a_1}P)(a_2)\\
		p_2(\gr{a}) &=	\frac{1}{a_3-a_2}\left(\frac{P(a_3)-P(a_1)}{a_3-a_1}-	\frac{P(a_2)-P(a_1)}{a_2-a_1}\right) = \Delta_{a_2}(\Delta_{a_1}P)(a_3) 
	\end{align}
\end{subequations}
and so on. If some of the $a_k$ coincide then limits of the previous expressions have to be considered in the l.h.s. and some terms have to be replaced by derivatives.

We now introduce an analogue $\phi_{a,b}$ of the functions $\psi_{p,q}$ in \eqref{eq:def:psi} through the following definition for any complex numbers $a$ and $b$:
\begin{equation}\label{eq:def:phi}
	\begin{split}
	\phi_{a,b} : \setK^2&\to \setK^2 \\
			\begin{pmatrix}u\\v\end{pmatrix} & \mapsto 
			\begin{pmatrix}
				\phi^{(1)}_{a,b}(u,v) \\
				\phi^{(2)}_{a,b}(u,v)
			\end{pmatrix}=
			\begin{pmatrix}
				1 & b-a \\ 0 &1
			\end{pmatrix}\begin{pmatrix} u\\v\end{pmatrix}
	\end{split}
\end{equation}
This corresponds to the change of basis of the same degree $1$ polynomial in the two \mrb with $(a_1,a_2)=(a,b)$ and $(a_1,a_2)=(b,a)$ by considering \[u+v(X-a) = (u+v(b-a)) + v (X-b) = v'+u'(X-b)\].

\begin{lemm}[change of basis under transposition for polynomials]\label{lemm:fundapolynomials}
	Let $\gr{a}=(a_i)_{1\leq i}$ be a \mrb of polynomials. For any given $k\in\setN^*$, let $\tau_{k,k+1}$ be the transposition of the two elements $k$ and $k+1$. The coefficients $(p'_n)_{n\in\setN}$ of $n$ in the \mrb $\tau_{k,k+1}(\gr{a})$ of a polynomial $P$ are obtained from the coefficients $(p_i)_{n\in\setN}$ in the \mrb $\gr{a}$ through 
	\begin{equation}
		p'_n = \begin{cases}
			\phi_{a_k,a_{k+1}}^{(1)}(p_{k-1},p_k) & \text{if $n=k-1$}
			\\
			\phi_{a_k,a_{k+1}}^{(2)}(p_{k-1},p_k) & \text{if $n=k$}
			\\
			p_n & \text{else}
		\end{cases}
	\end{equation} 
\end{lemm}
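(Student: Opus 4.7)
The plan is to mirror the proof of Lemma~\ref{lemm:fundanumbers} in the polynomial setting, where the rewriting step becomes a purely algebraic manipulation in $\setK[X]$. Denote $\gr{a}' = \tau_{k,k+1}(\gr{a})$. I would start from the decomposition
\[
P = \sum_{n \geq 0} p_n \, \pi_n^{\gr{a}}(X)
\]
and split the sum into three parts: indices $n \leq k-2$, the two middle indices $n = k-1$ and $n = k$, and indices $n \geq k+1$. The key observation is that swapping $a_k$ and $a_{k+1}$ leaves $\pi_n^{\gr{a}}$ unchanged for $n \leq k-1$ (since only $a_1,\ldots,a_{k-1}$ occur in the product) and also for $n \geq k+1$ (since both $a_k$ and $a_{k+1}$ appear as factors and the product does not depend on their order). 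Only $\pi_k^{\gr{a}}$ is altered, becoming $\pi_k^{\gr{a}'} = \pi_{k-1}^{\gr{a}} \cdot (X - a_{k+1})$.

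I would then focus on the two middle terms, factoring out $\pi_{k-1}^{\gr{a}}$:
\[
p_{k-1}\, \pi_{k-1}^{\gr{a}} + p_k\, \pi_k^{\gr{a}} = \pi_{k-1}^{\gr{a}} \bigl( p_{k-1} + p_k (X - a_k) \bigr).
\]
The heart of the argument is the identity
\[
p_{k-1} + p_k (X - a_k) = \bigl( p_{k-1} + (a_{k+1} - a_k)\, p_k \bigr) + p_k\, (X - a_{k+1}),
\]
which is exactly the degree-one rewriting that motivated the definition of $\phi_{a_k,a_{k+1}}$ in \eqref{eq:def:phi}. Substituting back, the middle contribution becomes
\[
\phi^{(1)}_{a_k,a_{k+1}}(p_{k-1},p_k)\, \pi_{k-1}^{\gr{a}'} + \phi^{(2)}_{a_k,a_{k+1}}(p_{k-1},p_k)\, \pi_k^{\gr{a}'},
\]
while all other terms carry over with unchanged coefficients and unchanged basis elements.

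Finally, I would invoke uniqueness of the expansion in the basis $(\pi_n^{\gr{a}'})_{n \geq 0}$, which holds because these polynomials are monic of strictly increasing degree and hence form a $\setK$-basis of $\setK[X]$. No real obstacle is expected here: the polynomial case is structurally simpler than the integer one since there is no carry to track, and the map $\phi_{a,b}$ has been built precisely to encode the elementary identity $u + v(X-a) = (u+v(b-a)) + v(X-b)$ that drives the whole computation.
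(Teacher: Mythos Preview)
Your proof is correct and follows exactly the approach the paper intends: it explicitly states that the proof ``follows the same steps as for numbers'', and your argument is precisely the polynomial analogue of the proof of Lemma~\ref{lemm:fundanumbers}, with the rewriting $u+v(X-a_k)=(u+(a_{k+1}-a_k)v)+v(X-a_{k+1})$ playing the role of the Euclidean step and uniqueness coming from the fact that $(\pi_n^{\gr{a}'})_n$ is a triangular basis of $\setK[X]$.
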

We do not repeat here the proof since it follows the same steps as for numbers. As for numbers, the introduction of \mrbs introduces elementary local operations for coefficients, which can be exploited to obtain interesting alternative representations of polynomials and alternative algorithms of change of basis.

\section{Horner-type algorithmic applications}
\label{sec:algo}

\subsection{\Mrbs between two constant radix bases}
In this section, we fix once and for all two integers $p$ and $q$ strictly larger than $1$ and consider the associated constant radix $p$- and $q$-bases defined by $b_k=p$ and $b'_k=q$ for all $k\in\setN$. Any integer $n\in\setN$ has unique decompositions
\[
n = \sum_{0\leq k} a_k p^k = \sum_{0\leq l} a'_l q^l
\]
with $0\leq a_k <p$ and $0\leq a'_l <q$ and we consider the algorithmic computation of the digits $(a'_l)$ from the digits $(a_k)$.

We will consider here only \emph{finite} \mrbs in order to have initialization for our algorithm but the length of the bases can be arbitrarily large so that any number can be treated by the algorithms below.

Converting a finite constant radix $p$ decomposition with size $l_1$ to a finite constant radix $q$ decomposition of length $l_2$ requires the introduction of the following set of \mrbs with extended length.
\begin{equation}\label{eq:interpmrb}
\ca{B}_{p,q}(l_1,l_2) = \left\{ \gr{b}\in \{p,q\}^{l_1+l_2} ; |\{1\leq i\leq l_1+l_2; b_i=p\}|=l_1  \right\}
\end{equation}
In an sequence $\gr{b}\in\ca{B}_{p,q}(l_1,l_2)$, the number $p$ appears $l_1$ times and $q$ appears $l_2$ times. The symmetric group $\Sym{l_1+l_2}$ acts as previously on $\ca{B}_{p,q}(l_1,l_2)$ and this action is transitive. All of these bases in $\ca{B}_{p,q}(l_1,l_2)$ allows one to decompose any number $n\in\{0,1,\ldots,p^{l_1}q^{l_2}-1\}$ in $l_1+l_2$ suitably chosen digits either in $\{0,1,\ldots,p-1\}$ or in $\{0,1,\ldots,q-1\}$.

Two particular \mrbs are extremal: the basis with $b^{\North\West}_i=q$ for $1\leq i\leq l_2$ and $b_i^{\North\West}=p$ for $i>l_2$ and the basis $b^{\South\East}_i=p$ for $1\leq i\leq l_1$ and $b^{\South\East}_i=q$ for $i>l_1$. We first embed the constant radix $p$ and $q$ bases in these two extremal \mrbs in the following trivial way.

\begin{lemm}\label{lemm:borderrep}
Any integer $n\in\{0,1,\ldots,p^{l_1}-1\}$ (resp. $n\in\{0,1,\ldots,q^{l_2}-1\}$) with a decomposition 
\[
n = \sum_{0\leq k<l_1} a_k p^k  \qquad \left(\text{resp. } n = \sum_{0\leq l<l_2} a'_l q^l \right)
\]
admits the following decomposition in $\gr{b}^{\South\East}$ (resp. in $\gr{b}^{\North\West}$):
\[
n = \sum_{0\leq i<l_1+l_2} \ov{a}_i \pi_i^{\South\East}  \qquad \left(\text{resp. } n = \sum_{0\leq i<l_2+l_2} \ov{a}'_i \pi_i^{\North\West}\right)
\]
with $\ov{a}_i=a_i$ for $i<l_1$ and $\ov{a}_i=0$ for $i\geq l_1$ and $\ov{a}'_i=a'_i$ for $i<l_2$ and $\ov{a}'_i=0$ for $i\geq l_2$.
\end{lemm}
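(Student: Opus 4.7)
The plan is to verify the claimed decomposition directly by computing the product sequence of the extremal basis and then invoking the uniqueness part of Proposition~\ref{prop:mixedradix:finite}. I treat the $\gr{b}^{\South\East}$ case; the $\gr{b}^{\North\West}$ case follows by interchanging the roles of $p$ and $q$.

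First I would compute the product sequence of $\gr{b}^{\South\East}$ using Definition~\ref{def:mixedradix}. Since $b^{\South\East}_i=p$ for $1\leq i\leq l_1$ and $b^{\South\East}_i=q$ for $l_1<i\leq l_1+l_2$, an immediate induction gives
\[
\pi^{\South\East}_i = \begin{cases} p^i & \text{if } 0\leq i\leq l_1, \\ p^{l_1}\,q^{i-l_1} & \text{if } l_1\leq i\leq l_1+l_2.\end{cases}
\]
In particular $\pi^{\South\East}_i=p^i$ on the range where $\bar{a}_i$ is allowed to be non-zero.

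Next I would check that the proposed sequence $(\bar{a}_i)_{0\leq i<l_1+l_2}$ is a valid digit sequence for $\gr{b}^{\South\East}$ in the sense of Proposition~\ref{prop:mixedradix:finite}(i): for $i<l_1$ one has $\bar{a}_i=a_i\in\{0,\ldots,p-1\}=\{0,\ldots,b^{\South\East}_{i+1}-1\}$ because $(a_k)$ is a digit sequence in base $p$; for $l_1\leq i<l_1+l_2$ one has $\bar{a}_i=0$, which trivially lies in $\{0,\ldots,b^{\South\East}_{i+1}-1\}=\{0,\ldots,q-1\}$.

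Then I would evaluate the candidate sum. Splitting the range at $l_1$ and using the zeros of $\bar{a}_i$ for $i\geq l_1$, I get
\[
\sum_{0\leq i<l_1+l_2} \bar{a}_i\,\pi^{\South\East}_i \;=\; \sum_{0\leq i<l_1} a_i\,p^i \;=\; n,
\]
so the sequence $(\bar{a}_i)$ satisfies condition (ii) of Proposition~\ref{prop:mixedradix:finite} for the \mrb $\gr{b}^{\South\East}$ and the integer $n$, which additionally lies in $\{0,\ldots,p^{l_1}q^{l_2}-1\}=\{0,\ldots,\pi^{\South\East}_{l_1+l_2}-1\}$ since $n<p^{l_1}$. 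Uniqueness in Proposition~\ref{prop:mixedradix:finite} then forces $(\bar{a}_i)$ to be \emph{the} digit sequence of $n$ in $\gr{b}^{\South\East}$, which is the claim. The $\gr{b}^{\North\West}$ case is identical with $p$ and $q$ swapped and $l_1$ replaced by $l_2$.

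There is no genuine obstacle here: the statement is a bookkeeping check that padding a short constant-radix expansion with zeros produces the correct digit sequence in the extended mixed radix basis. The only mildly non-trivial point worth highlighting is that the zero digits on the tail automatically meet the bound $b^{\South\East}_{i+1}=q$ required in the new basis even though they originate from a decomposition that knew nothing about $q$, which is why the embedding is canonical.
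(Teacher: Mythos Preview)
Your proof is correct. The paper does not provide a proof of this lemma at all: it introduces it with the phrase ``in the following trivial way'' and moves on, so your direct verification via the product sequence and the uniqueness in Proposition~\ref{prop:mixedradix:finite} is exactly the bookkeeping the paper leaves implicit.
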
 

In particular, any number $0\leq n < \min(p^{l_1},q^{l_2})$ admits four basic decompositions, one in the constant radix $p$ basis, one in the constant radix $q$ basis, one in  $\gr{b}^{\South\East}$ and one in $\gr{b}^{\North\West}$. More generally, this number $n$ admits decompositions in all the \mrbs in $\ca{B}_{p,q}(l_1,l_2)$ and these decompositions are related by transformations under $\Sym{l_1+l_2}$. 

\begin{prop}\label{prop:changemrb:decomp}
	For all pair $(\gr{b},\gr{b}')$ of \mrbs in $\ca{B}_{p,q}(l_1,l_2)$, there exists a sequence of transformations $(t_k)_{1\leq k\leq K}$ of digits such that:
	\begin{enumerate}[(i)]
		\item for all $0\leq n <p^{l_1}q^{l_2}$, $t_{K}\circ t_{K_-1}\circ\ldots\circ t_1$ maps the decomposition of $n$ of in the  basis $\gr{b}$ to the decomposition of $n$ in the basis $\gr{b}'$
		\item all $1\leq k\leq K$, $t_k$ transforms only two digits and leaves the other unchanged.
	\end{enumerate}
\end{prop}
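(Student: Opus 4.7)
The plan is to combine three ingredients: transitivity of the $\Sym{l_1+l_2}$-action on $\ca{B}_{p,q}(l_1,l_2)$ noted just before the proposition, the classical fact that $\Sym{l_1+l_2}$ is generated by the adjacent transpositions $\tau_{k,k+1}$, and the local digit rule of Lemma \ref{lemm:fundanumbers}, which converts a single adjacent transposition of the basis into a transformation of digits affecting only two positions. In short, the global change of basis is cut into a finite sequence of elementary moves, and each move is already treated by the fundamental lemma.

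First, given $\gr{b},\gr{b}'\in\ca{B}_{p,q}(l_1,l_2)$, transitivity yields a permutation $\sigma\in\Sym{l_1+l_2}$ with $\sigma(\gr{b})=\gr{b}'$. Writing $\sigma=\tau_{k_K,k_K+1}\circ\cdots\circ\tau_{k_1,k_1+1}$ as a product of adjacent transpositions, I would define intermediate bases $\gr{b}^{(0)}=\gr{b}$ and $\gr{b}^{(j)}=\tau_{k_j,k_j+1}(\gr{b}^{(j-1)})$ for $1\leq j\leq K$, so that $\gr{b}^{(K)}=\gr{b}'$. Each $\gr{b}^{(j)}$ still belongs to $\ca{B}_{p,q}(l_1,l_2)$ since permuting entries preserves the multiset of values $\{p,q\}$ with the prescribed multiplicities.

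Next, I would apply Lemma \ref{lemm:fundanumbers} inductively to $\gr{b}^{(j-1)}$: the digit transformation $t_j$ it produces depends only on the two digits at positions $k_j-1$ and $k_j$, is given by $\psi_{b^{(j-1)}_{k_j},b^{(j-1)}_{k_j+1}}$, and leaves all other digits untouched. Composing these elementary moves, $t_K\circ\cdots\circ t_1$ sends the $\gr{b}$-decomposition of any $n<p^{l_1}q^{l_2}$ to its $\gr{b}'$-decomposition, which establishes both (i) and (ii) simultaneously, with $K$ equal to the length of the chosen word in adjacent transpositions.

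The proof is essentially bookkeeping, and there is no deep obstacle. The only delicate point to verify, already built into the codomain of $\psi_{p,q}$ in \eqref{eq:def:psi}, is the inductive compatibility of digit ranges: after applying $t_j$, the new digits at positions $k_j-1$ and $k_j$ must lie in $\{0,\ldots,b^{(j)}_{k_j}-1\}$ and $\{0,\ldots,b^{(j)}_{k_j+1}-1\}$ respectively, which is precisely what allows Lemma \ref{lemm:fundanumbers} to be reapplied at the next stage. Note that the sequence $(t_k)$ is not unique, since $\sigma$ admits many decompositions into adjacent transpositions; this non-uniqueness is exactly the combinatorial slack that will later be exploited in Section \ref{sec:YB} through the Yang--Baxter relations.
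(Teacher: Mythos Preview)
Your proof is correct and follows exactly the same approach as the paper: transitivity of the $\Sym{l_1+l_2}$-action yields a permutation $\sigma$ with $\sigma(\gr{b})=\gr{b}'$, which is then decomposed into adjacent transpositions so that Lemma~\ref{lemm:fundanumbers} can be applied at each step. Your write-up is in fact more detailed than the paper's, which dispatches the argument in three lines.
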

\begin{proof}
	The symmetric group $\Sym{l_1+l_2}$ acts transitively on $\ca{B}_{p,q}(l_1,l_2)$ and thus there exists $\sigma \in\Sym{l_1+l_2}$ such that $\sigma(\gr{b})=\gr{b}'$. Moreover, the permutation $\sigma$ can be decomposed as a product of transpositions $t_k=\tau_{i_{k},i_{k}+1}$. The result is thus achieved by using lemma~\ref{lemm:fundanumbers}.
\end{proof}

\subsection{A two-dimensional interpretation of the intermediate \mrbs.}

\subsubsection{Rectangles and paths}
We now provide the key result of the paper: a two-dimensional interpretation of proposition~\ref{prop:changemrb:decomp} together with algorithmic application. To this purpose, we start with description of the rectangular graphs that are required.

For any positive integers $l_1$ and $l_2$, we introduce the rectangle 
\[
	R(l_1,l_2)=\{0,1,\ldots,l_1\}\times \{0,1,\ldots,l_2\}\subset \setZ^2
\]
endowed with the unoriented graph structure defined by the set $E_h(l_1,l_2)$ of size $l_1(l_2+1)$ horizontal edges $\{(x,y),(x+1,y)\}$ and the set $E_v(l_1,l_2)$ of size $l_2(l_1+1)$ of vertical edges $\{(x,y),(x,y+1)\}$ inside $R(l_1,l_2)$. In order to simplify notations, an unoriented edge will be represented by its middle point in $\setZ\times (\setZ+1/2)$ for vertical edges and in $ (\setZ+1/2)\times\setZ$ for horizontal edges through:
\begin{align*}
	\{(k,l),(k+1,l) \} &\leftrightarrow (k+1/2,l)
	\\
	\{(k,l),(k,l+1) \} &\leftrightarrow (k,l+1/2)
\end{align*}

Any \mrb in $\ca{B}_{p,q}(l_1,l_2)$ defines an oriented path  $\gamma_{\gr{b}} : \{0,1,\ldots,l_1+l_2\} \to R(l_1,l_2)$ which joins $\gamma_{\gr{b}}(0)=(0,0)$ to $\gamma_{\gr{b}}(l_1+l_2)=(l_1,l_2)$ such that each value $p$ (resp. $q$) moves the path to the right (resp. to the top), i.e., for all $0\leq k<l_1+l_2$,
\[
\gamma_{\gr{b}}(k+1)-\gamma_{\gr{b}}(k) = \begin{cases}
	(1,0) & \text{if $b_{k+1}=p$}	
	\\
	(0,1) & \text{if $b_{k+1}=q$}	
\end{cases}
\]

\subsubsection{Decorations on rectangles and number decompositions}

\begin{defi}
Let $A_1$ and $A_2$ be two sets. A \emph{$(A_1,A_2)$-decoration} on a rectangle $R(l_1,l_2)$ is a pair of maps $\alpha=(\alpha_h,\alpha_v)$ with $\alpha_h: E_h(l_1,l_2) \to A_1$ and $\alpha_v:E_v(l_1,l_2) \to A_2$.

For any function $f : A_1\times A_2 \to A_2\times A_1$,
an \emph{$f$-compatible} decoration is a $(A_1,A_2)$-decoration $\alpha=(\alpha_h,\alpha_v)$ such that, for any $0\leq x<l_1$ and and $0\leq y<l_2$,
\begin{equation}\label{eq:deco:localcompat}
f\left(
 \alpha_h( x+1/2,y ), \alpha_v( x+1,y+1/2 )
\right)
\\
=
\left(
\alpha_v( x,y+1/2 ), \alpha_h( x+1/2,y+1 )
\right)
\end{equation}

We note $\gr{D}_{A_1,A_2}(f;l_1,l_2)$ the set of $f$-compatible $(A_1,A_2)$-decorations on a rectangle $R(l_1,l_2)$, or simply $\gr{D}(f;l_1,l_2)$ if $A_1$ and $A_2$ are clear from context or from $f$.
\end{defi}
In order to manipulate lighter notations, we will also write a decoration as a function $\alpha: E_h(l_1,l_2)\cup E_v(l_1,l_2)\to A_1\sqcup A_2$ such that, for any $e\in E_h(l_1,l_2)$, $\alpha(e)=\alpha_h(e) \in A_1$, and, for any $e\in E_v(l_1,l_2)$, $\alpha(e)=\alpha_v(e) \in A_2$.

The compatibility condition \eqref{eq:deco:localcompat} says that for any elementary one-by-one square of $R(l_1,l_2)$, the labels of the top (North) and left (West) edges are obtained by applying $f$ to the labels of the bottom (South) and right (East) edges. An example is provided in figure~\ref{fig:rectangleexample} for $A_1=\{0,1,2\}$, $A_2=\{0,1,2,3,4\}$ and $f=\psi_{3,5}$ defined in \eqref{eq:def:psi}.

\begin{figure}
	\begin{center}
		\begin{tikzpicture}[scale=1.25]
			\foreach \x in {0,1,...,5} {
				\draw (\x,0) -- (\x,4);
			}
			\foreach \y in {0,1,...,4} {
				\draw (0,\y) -- (5,\y);
			}
			\begin{scope}[shift={(0.2,-0.2)}]
				\draw[->,thick,red] (0,0)--(5,0)--(5,4);
				\node at (5,1) [red,right] {$\gr{b}^{\South\East}$};
			\end{scope}
			\begin{scope}[shift={(-0.2,0.2)}]
				\draw[->,thick,red] (0,0)--(0,4)--(5,4);
				\node at (0,3) [red,left] {$\gr{b}^{\North\West}$};
			\end{scope}
			\begin{scope}[shift={(0.05,-0.05)}]
				\draw[->,thick,blue] (0,0)--(2,0)--(2,1)--(3,1)--(3,2)--(4,2)--(4,4)--(5,4);
				\node at (3.,1.5) [right,blue] {$\gr{b}$};
			\end{scope}
			\begin{scope}[shift={(-0.05,0.05)}]
				\draw[->,thick,violet] (0,0)--(2,0)--(2,1)--(2,2)--(3,2)--(4,2)--(4,4)--(5,4);
				\node at (2.5,2) [above,violet] {$\gr{b'}$};
			\end{scope}
			\draw[->,thick] (2.8,1.2)-- node {$\psi_{3,5}$} (2.2,1.8) ;
			\foreach \x/\n in {0/2,1/1,2/0,3/2,4/2}  
				\node at (\x+0.5,0)  {$\mathbf{\n}$};
			\foreach \x/\n in {0/2,1/2,2/1,3/1,4/0}  
				\node at (\x+0.5,1) {$\mathbf{\n}$};
			\foreach \x/\n in {0/2,1/2,2/0,3/0,4/0}  
				\node at (\x+0.5,2) {$\mathbf{\n}$};
			\foreach \x/\n in {0/1,1/0,2/0,3/0,4/0}  
				\node at (\x+0.5,3) {$\mathbf{\n}$};
			\foreach \x/\n in {0/0,1/0,2/0,3/0,4/0}  
				\node at (\x+0.5,4) {$\mathbf{\n}$};
			\foreach \y/\n in {0/1,1/4,2/3,3/1}  
				\node at (0,\y+0.5) {$\mathit{\n}$};
			\foreach \y/\n in {0/3,1/4,2/2,3/0}  
				\node at (1,\y+0.5) {$\mathit{\n}$};
			\foreach \y/\n in {0/4,1/4,2/0,3/0}  
				\node at (2,\y+0.5) {$\mathit{\n}$};
			\foreach \y/\n in {0/3,1/1,2/0,3/0}  
				\node at (3,\y+0.5) {$\mathit{\n}$};
			\foreach \y/\n in {0/2,1/0,2/0,3/0}  
				\node at (4,\y+0.5) {$\mathit{\n}$};
			\foreach \y/\n in {0/0,1/0,2/0,3/0}  
				\node at (5,\y+0.5) {$\mathit{\n}$};	
		\end{tikzpicture}
	\end{center}
	\caption{\label{fig:rectangleexample}Two \mrbs $\gr{b}=(3,3,5,3,5,3,5,5,3)$ and $\gr{b}'=(3,3,5,5,3,3,5,5,3)$ for $(p,q)=(3,5)$ that differ by one transposition $\tau_{4,5}$ and the two extremal \mrbs $\gr{b}^{\South\East}=(3,3,3,3,3,5,5,5,5)$ and $\gr{b}^{\North\West}=(5,5,5,5,3,3,3,3,3)$ for lengths $(l_1,l_2)=(5,4)$. Along the edges are written the digits of the decomposition of $n=221$ in all the \mrbs of $\ca{B}_{3,5}(5,4)$: digits in bold on horizontal lines are base $3$ digits in $\{0,1,2\}$, digits in italic on vertical lines are base 5 digits in $\{0,1,2,3,4\}$.}
\end{figure}

\begin{theo}\label{theo:2Drep:numbers}
	For any integers $p$ and $q$ strictly larger than $1$, for any integers $l_1,l_2\in\setN$, there is a bijection 
	\begin{equation}\label{eq:numberbijection2D}
	 \begin{split}
	 	\Psi : \{0,1,2,\ldots,p^{l_1}q^{l_2}-1\} &\to \gr{D}(\psi_{p,q};l_1,l_2)
	 	\\
	 	n&\mapsto \Psi_n
	 \end{split}
 	\end{equation}
	such that, for any \mrb $\gr{b}\in\ca{B}_{p,q}(l_1,l_2)$, the decomposition of $n$ on $\gr{b}$ is given by the decorations along the path $\gamma_{\gr{b}}$ as follows:
	\begin{equation}\label{eq:integerpathrep}
		n = \sum_{0\leq i< l_1+l_2} \Psi_n( \{\gamma_\gr{b}(i),\gamma_\gr{b}(i+1)\}) \pi^{\gr{b}}_i
	\end{equation}
\end{theo}

Before providing the proof, we illustrate the theorem on the example described in figure~\ref{fig:rectangleexample}. We consider $(p,q)=(3,5)$ and $(l_1,l_2)=(5,4)$ and the number $n=221$ such that $n<3^5=243$ and $n<5^4=625$ (and thus $n\leq 3^5 5^4=151875$). One checks by hand that $n$ can be written as follows in the constant radix $3$ and $5$ bases:
\begin{align*}
	n =221_{10} &= \gr{2} + \gr{1}\cdot 3 + \gr{0}\cdot 3^2 + \gr{2}\cdot 3^3 + \gr{2}\cdot 3^4  = \gr{22012}_3
	\\
	&= \mathit{1} + \mathit{4}\cdot 5 + \mathit{3}\cdot 5^2 + \mathit{1}\cdot 5^3  = \mathit{1341}_5
\end{align*}
Using lemma~\ref{lemm:borderrep}, these two decompositions are completed by zeroes in the \mrb $\gr{b}^{\South\East}=(3,3,3,3,3,5,5,5,5)$ and $\gr{b}^{\North\West}=(5,5,5,5,3,3,3,3,3)$. The paths associated to $\gr{b}^{\South\East}$ (resp. $\gr{b}^{\North\West}$) follows the South (resp. West) edge of the rectangles and then the East (resp. North) edges, hence justifying the name a posteriori. The digits of the two decompositions above are then placed on the edges around the rectangle $R(5,4)$. For clarity, we keep the graphical distinction between bold and italic digits. We can consider any intermediate \mrb such as $\gr{b}=(3,3,5,3,5,3,5,5,3)$ or $\gr{b}'=(3,3,5,5,3,3,5,5,3)$. These two \mrbs are related by $\gr{b'}=\tau_{4,5}(\gr{b})$. In these \mrbs, $n=221$ can be decomposed as:
\begin{align*}
	n &= \gr{2} + \gr{1}\cdot 3 +\mathit{4}\cdot 3^2 + \gr{1} \cdot 3^2\cdot 5 + \mathit{1} \cdot 3^3\cdot 5 = \mathit{1}\gr{1}\mathit{4}\gr{1}\gr{2}_\gr{b}
	\\
	&=\gr{2} + \gr{1}\cdot 3 +\mathit{4}\cdot 3^2 + \mathit{4} \cdot 3^2\cdot 5 + \gr{0} \cdot 3^2\cdot 5^2 = \mathit{4}\mathit{4}\gr{1}\gr{2}_{\gr{b}'} = \gr{0}\mathit{4}\mathit{4}\gr{1}\gr{2}_{\gr{b}'}
\end{align*}
The first three digits in both bases coincide since the paths $\gamma_{\gr{b}}$ and $\gamma_{\gr{b}'}$ coincide along their first three increments. The next two digits differ through $(\mathit{4},\gr{0})=\psi_{3,5}(\gr{1},\mathit{1})$ since both paths are separated by an elementary square.

\begin{proof}[Proof of theorem~\ref{theo:2Drep:numbers}]
	The proof relies on the fact the decoration $\Psi_n$ is determined by a the single knowledge of the decomposition of $n$ in the basis $\gr{b}^{\South\East}$. The same lemma is at the origin of the algorithms presented below.
	
	\begin{lemm}\label{lemm:filling}
		Let $\alpha$ be a $f$-compatible decoration on a rectangle $R(l_1,l_2)$. Let $\gamma_{\gr{b}}$ be an oriented path from $(0,0)$ to $(l_1,l_2)$, which divides $R(l_1,l_2)$ into a North-West part $R^{\North\West}_\gamma(l_1,l_2)$ and $R^{\South\East}_\gamma(l_1,l_2)$. Then, $\alpha$ is entirely determined on $R^{\North\West}_\gamma(l_1,l_2)$ from its restriction along $\gamma$. If, furthermore, $f$ is invertible, then $\alpha$ is also determined on the whole rectangle $R(l_1,l_2)$.
	\end{lemm}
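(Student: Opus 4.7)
The plan is to reconstruct the decoration one unit square at a time by ``sweeping'' the path $\gamma$ across the NW region (and across the SE region when $f$ is invertible). The key observation is a reinterpretation of the compatibility condition \eqref{eq:deco:localcompat}: for any unit square $[x,x+1]\times[y,y+1]$, the pair (West edge, North edge) is the image under $f$ of the pair (South edge, East edge). Hence knowing the South and East decorations of a square uniquely determines the West and North decorations; if $f$ is invertible, knowing the West and North determines the South and East via $f^{-1}$.

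For the NW region, I would build a finite sequence of monotone paths $\gamma = \gamma_0, \gamma_1, \ldots, \gamma_M = \gamma^{\North\West}$ from $(0,0)$ to $(l_1,l_2)$, where $\gamma_{t+1}$ is obtained from $\gamma_t$ by the following local move: locate an East-then-North corner of $\gamma_t$, say at $(x,y)$, and replace the two edges $(x-1/2,y)$ and $(x,y+1/2)$ by the two edges $(x-1,y+1/2)$ and $(x-1/2,y+1)$, thereby absorbing the unit square $[x-1,x]\times[y,y+1]$ from $R_\gamma^{\North\West}(l_1,l_2)$ into the area swept by the frontier. Such a corner exists whenever $\gamma_t\neq\gamma^{\North\West}$, because a monotone path lacking any East-then-North corner must consist of all North steps followed by all East steps, i.e. must be $\gamma^{\North\West}$ itself. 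Therefore $M$ equals the number of unit squares in $R_\gamma^{\North\West}(l_1,l_2)$ and every edge in that region lies on some $\gamma_t$.

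I would then prove by induction on $t$ that $\alpha$ is determined on every edge of $\gamma_t$. The base case $t=0$ is the hypothesis. For the induction step, the South and East edges of the flipped square both lie on $\gamma_t$ and are thus already determined; applying $f$ yields the West and North edges of that square, which are precisely the new edges of $\gamma_{t+1}$. This proves the first assertion. When $f$ is invertible, the same strategy, run symmetrically, handles $R_\gamma^{\South\East}(l_1,l_2)$: build a sequence $\gamma = \gamma'_0, \gamma'_1, \ldots, \gamma'_{M'} = \gamma^{\South\East}$ by flipping North-then-East corners outward (the SE-adjacent square has known North and West edges on the current frontier), use $f^{-1}$ to recover its South and East edges, and iterate until the frontier reaches $\gamma^{\South\East}$.

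The only nontrivial ingredient is the combinatorial lemma on monotone lattice paths used in the second paragraph, namely the existence of an East-then-North corner on any monotone path distinct from $\gamma^{\North\West}$ (and symmetrically for $\gamma^{\South\East}$); everything else reduces to a straightforward induction driven by the local relation \eqref{eq:deco:localcompat}.
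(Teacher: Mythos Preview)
Your proposal is correct and is essentially the same argument as the paper's: the paper phrases the sweep in terms of the transitive action of $\Sym{l_1+l_2}$ on paths via transpositions $\tau_{i,i+1}$ that send a $(p,q)$ step pair to $(q,p)$, which is exactly your East-then-North corner flip, and then applies \eqref{eq:deco:localcompat} at each flip just as you do. Your write-up is simply more explicit about the termination/coverage combinatorics (existence of a corner whenever $\gamma_t\neq\gamma^{\North\West}$), whereas the paper leaves this implicit in the transitivity statement.
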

	\begin{proof}
		This is a direct consequence of the transitive action of $\Sym{l_1+l_2}$. A path in $R^{\North\West}_\gamma(l_1,l_2)$ (resp.  $R^{\South\East}_\gamma(l_1,l_2)$) is obtained from $\gamma_{\gr{b}}$ by successive transpositions $\tau_{i,i+1}$ in $\Sym{l_1+l_2}$ that map a pair $(p,q)$ (resp. $(q,p)$) to a pair $(q,p)$ (resp. $(p,q)$). This corresponds to mapping consecutive  horizontal then vertical South-East increments to the corresponding West and North consecutive increments around an elementary square. For each of these transpositions, the West and North decorations are obtained from the South and East decorations by applying $f$ as in \eqref{eq:deco:localcompat}.
		
		If $f$ is invertible, then \eqref{eq:deco:localcompat} can be written as
		\[
		\left(
		\alpha_h( x+1/2,y ), \alpha_v( x+1,y+1/2 )
		\right)
		=f^{-1}
		\left(
		\alpha_v( x,y+1/2 ), \alpha_h( x+1/2,y+1 )
		\right)
		\]
		which corresponds to a switch between $p$ and $q$, hence a flip of the rectangle $R(l_1,l_2)$ to $R(l_2,l_1)$ around the first diagonal and reversed roles between West and South and between North and East.
	\end{proof}

	The bijection $\Psi$ is then obtained by first decomposing $n$ along the \mrb $\gr{b}^{\South\East}$ basis and then a generation of the whole decoration $\Psi_n\in\gr{D}(\psi_{p,q};l_1,l_2)$ by the previous lemma. Formula~\ref{eq:integerpathrep} is then obtained by applying, for each transposition in the previous lemma, the construction of lemma~\ref{lemm:fundanumbers} and proposition~\ref{prop:changemrb:decomp}. The inverse bijection $\Psi^{-1}$ is then obtained directly from \eqref{eq:integerpathrep}.
\end{proof}

\subsection{The polynomial version of the two-dimensional representation}
For polynomials, the same type of results as theorem~\ref{theo:2Drep:numbers} holds \emph{mutatis mutandis}. The set $\ca{B}_{p,q}(l_1,l_2)$ is replaced by replacing the integers $p$ and $q$ by two complex numbers $x$ and $y$. The same geometric interpretation of an element $\gr{a}\in\ca{B}_{p,q}(l_1,l_2)$ as an oriented path in $R(l_1,l_2)$ still holds. Decorations are now functions from the edges of $R(l_1,l_2)$ to $\setK$. Theorem~\ref{theo:2Drep:numbers} is now replaced by the following theorem proved in the same way.

\begin{theo}\label{theo:2Drep:polynomial}
	For any integers $l_1,l_2\in\setN$, there is a bijection 
	\begin{equation}\label{eq:polynomialbijection2D}
		\begin{split}
			\Phi : \setK[X]_{l_1+l_2} &\to \gr{D}_{\setK,\setK}(\phi_{x,y};l_1,l_2)
			\\
			P &\mapsto \Phi_P
		\end{split}
	\end{equation}
	such that, for any \mrb $\gr{a}\in\ca{B}_{x,y}(l_1,l_2)$, the decomposition of $P$ on $\gr{a}$ is given by the decorations along the path $\gamma_{\gr{a}}$ as follows:
	\begin{equation}\label{eq:polynomialpathrep}
		P = \sum_{0\leq n< l_1+l_2} \Phi_P( \{\gamma_\gr{a}(n),\gamma_\gr{a}(n+1)\}) \prod_{i=1}^n (X-a_i)
	\end{equation}
\end{theo}

We now present an example to illustrate the construction with $x=0$ and $y$ and consider a polynomial $P$ of degree $3$.
\[
P = p_0+p_1 X + p_2 X^2 +p_3 X^3
\]
The coefficients $(p_i)$ are written on the South side of $R(4,4)$ and completed by $0$ along the East side. Applying successive transpositions as in lemma~\ref{lemm:filling} and using now the triangular linear map $\phi_{x,y}$ provides the whole decoration as illustrated in figure~\ref{fig:squareexample:polynomial}. The triangular structure of $\phi_{x,y}$ is obvious on the figure: on each elementary square, both North and East values are equal. One recognize on the West size the sequence $P^{(k)}(y)/k!$ of successive derivatives of $P$ evaluated at $y$, hence the change of basis.

By considering the path of length $4$ associated to $\gr{a}=(y,0,0,y)$, one immediately reads on the figure the following decomposition in the \mrb:
\begin{equation}
	P = P(y) + (p_1+p_2y+p_3y^2) (X-y) + (p_2+p_3y) (X-y)X + p_3  (X-y)X^2
\end{equation}

\begin{figure}
	\begin{center}
	\begin{tikzpicture}[xscale=2.1]
		\foreach \x in {0,1,...,4} {
			\draw (\x,0) -- (\x,4);
		}
		\foreach \y in {0,1,...,4} {
			\draw (0,\y) -- (4,\y);
		}
		\begin{scope}[shift={(0.2,-0.2)}]
			\draw[->,thick,red] (0,0)--(4,0)--(4,4);
			\node at (3,0) [red,below] {$\gr{\pi}^{\South\East}= (X^k)_{k}$};
		\end{scope}
		\begin{scope}[shift={(-0.2,0.2)}]
			\draw[->,thick,red] (0,0)--(0,4)--(4,4);
			\node at (1,4) [red,above] {$\gr{\pi}^{\North\West}= ((X-y)^k)_k$};
		\end{scope}
		\begin{scope}[shift={(-0.025,0.05)}]
			\draw[->,thick,blue] (0,0)--(0,1)--(2,1)--(2,2);
		\end{scope}
		\node at (0.5,0) {{\footnotesize $p_0$}};
		\node at (1.5,0) {{\footnotesize $p_1$}};
		\node at (2.5,0) {{\footnotesize $p_2$}};
		\node at (3.5,0) {{\footnotesize $p_3$}};
		\node at (0.5,1) {{\footnotesize $p_1+p_2y +p_3y^2$}};
		\node at (1.5,1) {{\footnotesize $p_2+p_3y$}};
		\node at (2.5,1) {{\footnotesize $p_3$}};
		\node at (3.5,1) {{\footnotesize $0$}};
		\node at (0.5,2) {{\footnotesize $p_2+2p_3y$}};
		\node at (1.5,2) {{\footnotesize $p_3$}};
		\node at (2.5,2) {{\footnotesize $0$}};
		\node at (3.5,2) {{\footnotesize $0$}};
		\node at (0.5,3) {{\footnotesize $p_3$}};
		\node at (1.5,3) {{\footnotesize $0$}};
		\node at (2.5,3) {{\footnotesize $0$}};
		\node at (3.5,3) {{\footnotesize $0$}};
		\node at (0.5,4) {{\footnotesize $0$}};
		\node at (1.5,4) {{\footnotesize $0$}};
		\node at (2.5,4) {{\footnotesize $0$}};
		\node at (3.5,4) {{\footnotesize $0$}};
		\node at (4,0.5) {{\footnotesize $0$}};
		\node at (4,1.5) {{\footnotesize $0$}};
		\node at (4,2.5) {{\footnotesize $0$}};
		\node at (4,3.5) {{\footnotesize $0$}};
		
		\node at (3,0.5) {{\footnotesize $p_3$}};
		\node at (3,1.5) {{\footnotesize $0$}};
		\node at (3,2.5) {{\footnotesize $0$}};
		\node at (3,3.5) {{\footnotesize $0$}};
		
		\node at (2,0.5) {{\footnotesize $p_2+p_3y$}};
		\node at (2,1.5) {{\footnotesize $p_3$}};
		\node at (2,2.5) {{\footnotesize $0$}};
		\node at (2,3.5) {{\footnotesize $0$}};
		
		\node at (1,0.5) {{\footnotesize $p_1+p_2y+p_3y^2$}};
		\node at (1,1.5) {{\footnotesize $p_2+2p_3y$}};
		\node at (1,2.5) {{\footnotesize $p_3$}};
		\node at (1,3.5) {{\footnotesize $0$}};
		
		\node at (0,0.5) [left] {{\footnotesize$p_0+p_1y+p_2y^2+p_3y^3$}};
		\node at (0,1.5) [left]{{\footnotesize $p_1+2p_2y+3p_3y^2$}};
		\node at (0,2.5) [left]{{\footnotesize $p_2+3p_3y$}};
		\node at (0,3.5) [left]{{\footnotesize $p_3$}};
		
		\draw[dashed] (4,0)--(0,4);
	\end{tikzpicture}
	\end{center}
	\caption{\label{fig:squareexample:polynomial}Decorations associated to a change of basis from $x=0$ to arbitrary $y\in\setK$ of a polynomial of degree $3$. Due to the graded structure of polynomials, only zeroes appear above the dashed line.}
\end{figure}
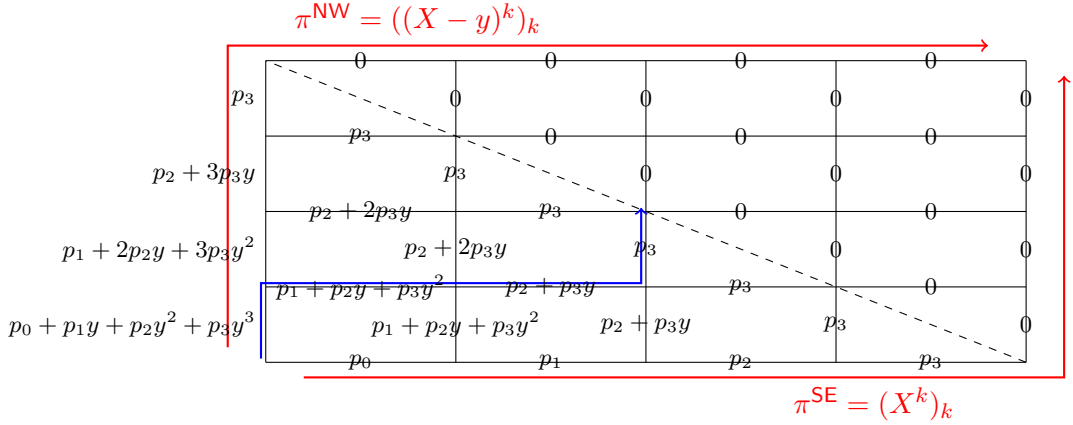

\subsection{Algorithmic content of the two-dimensional representation}
\subsubsection{The first line: Horner's rule}

The careful reader immediately recognize on the vertical edges of the first bottom line of figure~\ref{fig:squareexample:polynomial} the intermediate steps of Horner's rule of quick evaluation of a polynomial described in \eqref{eq:hornerpoly}. Indeed, one recognizes in this recursion the formula $u_{k-1}=\phi^{(1)}_{0,b-a}(p_{k-1}(a),u_k)$.

For numbers, the same is true and corresponds to the naive algorithm described in section~\ref{sec:naivepic}: the numbers on the vertical edges of the first bottom line of $R(l_1,l_2)$ corresponds to the recursion:
\[
u_{k-1} = \phi^{(1)}_{p,q}(a_{k-1},u_k) = a_{k-1}+p u_k \mod q
\]

The novelty here is the introduction of the full function $\phi$ and $\psi$ and not only the first components $\phi^{(1)}$ and $\psi^{(1)}$. This allows to close on the North the first bottom line of a rectangle and to start again the same Horner's rule on the second line of a rectangle. Moreover, lemma~\ref{lemm:filling} can be interpreted as a Markov property so that many computations can be parallelized. We present here algorithmic variants of the filling rules of lemma~\ref{lemm:filling}.

\subsubsection{Generation of a decoration}

In the following algorithms, we store $\alpha_h$ and $\alpha_v$ on $R(l_1,l_2)$ as two-dimensional arrays using the following the following shortcut notations:
\begin{align*}
	\alpha_h( i,j ) &:= \alpha_h( \{(i,j),(i+1,j)\} )   & 0\leq i<l_1, 0\leq j\leq l_2 \\
	\alpha_v( i,j ) &:= \alpha_h( \{(i,j),(i,j+1)\} )   & 0\leq i\leq l_1, 0\leq j< l_2 
\end{align*}
i.e. we remove the term $1/2$ used previously to indicate middles of edges in order to stick to computer science integer notations of arrays
The algorithm \ref{algo:decoration} generates the whole decoration associated to a South West boundary condition as in lemma~\ref{lemm:filling}. In this algorithms, both loops $1$ and $2$ can be interchanged. In the present case, all the columns are enumerate from right to left (loop 1) and, for each column, decorations are filled from bottom to top (loop $2$). One may switch both loops and fill lines from right to left, one after the other from bottom to top. One may also imagine hook iterations by filling the South East square and then fill in a parallel way the bottom line and the right column and then start again in the rectangle $[0,l_1]\times[1,l_2]$ of size $(l_1-1,l_2-1)$.

\begin{algorithm}
	\caption{General change of extremal basis in $\ca{B}_{p,q}(l_1,l_2)$\label{algo:decoration}}
	\begin{algorithmic}
		\Require $f: A\times B\to B\times A$
		\Require $x=(x_0,x_1,\ldots,x_{l_1-1},x'_0,\ldots,x'_{l_2-1}) \in A^{l_1}\times B^{l_2}$
		\Ensure 2D Arrays $\alpha_h$ and $\alpha_v$ of sizes $l_1(l_2+1)$ and $(l_1+1)l_2$.
		\State $\alpha_h(\bullet,0) \gets x_\bullet$		\Comment{South line init.}
		\State $\alpha_v(l_1,\bullet) \gets x'_\bullet$		\Comment{East line init.}
		\For{$i=l_1-1,\ldots,0$} \Comment{Loop 1}
		\For{$j=0,\ldots,l_2-1$}\Comment{Loop 2}
		\State $(\alpha_v(i,j),\alpha_h(i,j+1)) \gets f( \alpha_h(i,j), \alpha_v(i+1,j) )$
		\EndFor
		\EndFor
	\end{algorithmic}
\end{algorithm}

This algorithm works for both numeration bases and polynomials and provides two arrays on which one can read the decomposition in \emph{any} basis of $\ca{B}_{\bullet,\bullet}(l_1,l_2)$ provided we know the decomposition along an arbitrary given \mrb in $\ca{B}_{\bullet,\bullet}(l_1,l_2)$. However, in most cases, the input is only the particular case of lemma~\ref{lemm:borderrep}, in which the South line is provided and the East line is filled with $0$; moreover, in most cases, one is only interested in the content of the West line and one discards both the interior of the decoration and the North side. For these particular cases, we provide simplified algorithms below.

The complexity of algorithm~\ref{algo:decoration} is linear in the number of internal elementary squares and is thus given by $O(l_1l_2)$ where, which corresponds also the number of computed values. This is achieved using the fact that a right-to-left and bottom-to-top order of the squares allows to go only once through each of them and call the function $f$ only once per square. One also observes that the two for loops in this algorithm can be permuted by computing lines first, from bottom to top. In the concrete coding of this algorithm, further enhancements can be made to accelerate the program. For example, using parallelism, filling the $(i-1)$-th column from bottom to top can be started before the filling the $i$-th column is finished. 

\subsubsection{Enhanced Horner's rule for derivatives of polynomials}

Various Horner's rule exist in the literature of numeric computations to evaluate simultaneously a polynomial and its first derivative at the same point $y$ or to perform the complete Euclidean division of a polynomial $X-y$. All these computations are in fact much older than Horner's work but only the term "Horner's rule" survived in the literature (see \cite{enwiki:1222412612} and the associated discussion for an interesting point of view). All these variants correspond more or less to the following algorithm.

Let $P$ be a polynomial of degree $d$, $y$ be a given number and $k\geq 1$ a fixed integer. The quantity $(P(y),P'(y),\ldots,P^{(k-1)}(y))$ can be evaluated using the following algorithm within a rectangle of size $R(d+1,k)$ and more precisely in the lower left triangle $\{(i,j)\in R(d+1,k) ; i+j\leq d+1\}$. We consider the \mrbs in $\ca{B}_{0,y}(d+1,k)$ and simplify algorithm \ref{algo:decoration} to explore only the previous triangle (see figure~\ref{fig:squareexample:polynomial}) and reduce the use of memory space.

\begin{algorithm}
\caption{Simultaneous evaluation of the derivatives of a polynomial\label{algo:simultderivates}}
\begin{algorithmic}
	\Require $P=\sum_{k=0}^d p_k X^k$, $y\in\setK$
	\Ensure $u=(u_0,\ldots,u_{k-1})=(P(y),P'(y),\ldots,P^{(k-1)}(y))\in \setK^k$
	\State $(u_0,\ldots,u_{k-1}) \gets (p_d,0,0,\ldots,0)\in \setK^k$	\Comment{second to last vertical line}
	\For{$k$ from $d-1$ to $0$}
		\State $v\gets p_k$	\Comment{Value on the running horizontal segment}
		\For{$i$ from $0$ to $k-1$} \Comment{zeroes above the dashed line in fig.~\ref{fig:squareexample:polynomial}}
			\State $(u_i,v)\gets (v+u_i y,u_i)$ \Comment{corresponds to $\phi_{0,y}$}
		\EndFor
	\EndFor
	\State \Return $u$
\end{algorithmic}
\end{algorithm}

The complexity of algorithm \eqref{algo:simultderivates} is $O(k d)$: it is linear in both the degree $d$ of $P$ and the number $k$ of derivatives. This is similar to the application of $k$ times Horner's rule (one for each derivative). However, this algorithm is expected to run a bit faster since the array of coefficients of $P$ is browsed only once in the loop over $k$ and each coefficient is processed in the inner loop over $i$. 

A last advantage of this algorithm is that all the intermediate quantities in the computations have an interpretation similar to \eqref{eq:coeffinterpretations} can be stored and reused if needed.

\subsubsection{Conversion between constant radix $p$ to $q$ bases}

In the context of numeration basis, we start with a given decomposition $ n = \sum_{i=0}^{k-1} a_i p^i$ with $0\leq a_i<p$, $a_{k-1}\neq 0$. To obtain the full decomposition in the constant radix $q$ basis $n=\sum_{i=0}^{l-1} a'_i q^i$. Since the maximal value of $n$ is $p^k$, we need $q^l\geq p^k$ and thus at least $l= \lceil k\log p/\log q \rceil$. In the rectangle $R(k,l)$, it is clear that only the lower left triangle is filled. 

In the present case, the map $\psi_{p,q}$ is a function on a finite set of cardinal $pq$ that will be called a large number of times. Its definition include Euclidean division by $q$ that may potentially be a bit slow: it can be avoided by precomputing the table of all possible values of $\psi_{p,q}$ in a two-dimensional array of size $pq$.

\begin{algorithm}
	\caption{Conversion from constant radix $p$ basis to the constant radix $q$ basis\label{algo:changebasis}}
	\begin{algorithmic}
		\Require $p\geq 2$, $q\geq 2$, $k\geq 1$, $n=\sum_{i=0}^{k-1} a_i p^i$ with $0\leq a_i<p$
		\Ensure $(a'_0,\ldots,a'_{l-1})$ with $n=\sum_{i=0}^{l-1} a'_i q^i$ with $0\leq a'_i<q$
		
		\For{$i=0,\ldots,p-1$}	\Comment{Precomputation of the $\psi_{p,q}$ table}
			\For{$j=0,\ldots,q-1$} \Comment{with $pq$ Euclidean divisions}
				\State $\psi^{(1)}(i,j) \gets i+p j \mod q$
				\State $\psi^{(2)}(i,j) \gets (i+p j) / q$ 
			\EndFor
		\EndFor
		\State $\alpha \gets \log(p)/\log(q)$
		\State $a'\gets \emptyset$
		\For{$m$ from $k-1$ to $0$}
		\State $v\gets a_m$	\Comment{Horizontal segment running upwards}
		\State $h \gets \lceil (k-m)\alpha\rceil$ \Comment{Minimum height with only zeroes above}
		\State $a' \gets (a',0,...,0)$ to fill up to height $h$ \Comment{Zero default value above $h$}
		\For{$i$ from $0$ to $h-1$} 
		\State $(a'_i,v) \gets (\psi^{(1)}(v,a'_i), \psi^{(2)}(v,a'_i))$ \Comment{no division computed here !}
		\EndFor
		\EndFor
		\State \Return $a'$
	\end{algorithmic}
\end{algorithm}

It is interesting to remark that, once the table of values of $\psi_{p,q}$, no other arithmetic addition or product is required to fill the decorations on the triangle. The only non-trivial arithmetic operations are the ones in the line of algorithm \ref{algo:changebasis} that evaluates the current height $h$ but it can be also be avoided by transforming the inner loop into a while loop to describe the addition of new vertical segments on the columns when moving one column to the left.

An example of the intermediate values produced by algorithm~\ref{algo:changebasis} is provided on figure~\ref{fig:examplealgobasis}. 

\begin{figure}
\begin{center}
	\input{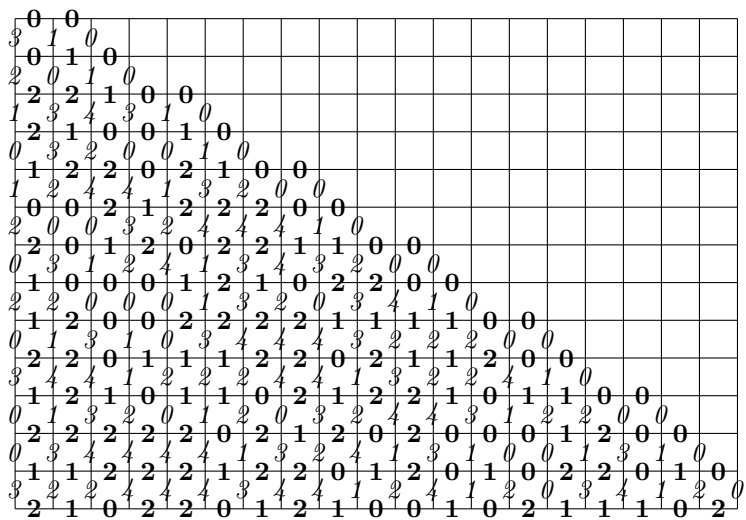}
\end{center}
\caption{\label{fig:examplealgobasis}Algorithm~\ref{algo:changebasis} running on an example with $(p,q)=(3,5)$ and $n=840397253$. Only values computed during the algorithm are present. The input is the lower South line and the output is the left West line. For each vertical strip, the variable $v$ in this algorithm corresponds successively to the values in $\{\gr{0},\gr{1},\gr{2}\}$ associated to the horizontal segements from bottom to top.}
\end{figure}

In terms of complexity, given a number $n$, it has around $\log(n)/\log(p)$ digits in the constant radix $p$ basis and around  $\log(n)/\log(q)$ digits in the constant radix $q$ basis. These numbers of digits are the lengths of the South and West segments in figure~\ref{fig:examplealgobasis}. The total size of the two loops in algorithm~\ref{algo:changebasis} is equal to the number of elementary squares to fill in figure~\ref{fig:examplealgobasis} and is thus equal to $\log(n)^2/(2\log(p)\log(q))$. The complexity of this algorithm is thus equal to \[O((\log(n))^2)\] In terms of computation times, this complexity hides two other advantages. First, the precomputation step of the $\psi_{p,q}$ table implies that the computation of $(a'_i,v)$ in the inner loop requires only memory access and no additional arithmetic operation and is thus really fast. Second, as in the case of polynomials discussed before, parallelism can be added to start the filling $(m-1)$-th column before the end of the computation of the $m$-th column. Finally, if needed, lines and columns can be permuted in the loops if one needs a quicker access to the first digits in the constant radix $q$ basis.

\subsection{Chinese remainder theorem and rotations}

We have seen that the maps $\psi_{p,q} : \{0,\ldots,p-1\} \times \{0,\ldots,q-1\} \to \{0,\ldots,q-1\} \times \{0,\ldots,p-1\}$ are invertible and correspond to shift of decorations from the South and East edges to the West and North edges. The inverse map provides a shift of decorations from West and North edges to South and East edges. One may wonder if similar maps exists along the other diagonal.

\begin{prop}
	Let $p$ and $q$ be two integers strictly larger than $1$ such that $\gcd(p,q)=1$. Then there exists a unique map $\psi_{q,p}^{\nearrow}: \{0,1,\ldots,q-1\}\times \{0,1,\ldots,p-1\} \to \{0,1,\ldots,p-1\}\times \{0,1,\ldots,q-1\}$ such that
	\[
	(u_{\West},u_{\North}) = \psi_{p,q}( u_{\South},u_\East) \qquad \Longleftrightarrow \qquad (u_\North,u_\East)=\psi_{q,p}^{\nearrow}(u_\West,u_{\South})
	\]
\end{prop}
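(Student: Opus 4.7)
The plan is to recognize the defining relation of $\psi_{p,q}$ as an instance of the Chinese remainder theorem. First, I would rewrite the equivalence $(u_{\West}, u_{\North}) = \psi_{p,q}(u_{\South}, u_{\East})$ as the single integer equation
\[
n := u_{\South} + p\, u_{\East} = u_{\West} + q\, u_{\North}, \qquad n \in \{0, 1, \ldots, pq-1\},
\]
together with the range constraints $u_{\South}, u_{\North} \in \{0,\ldots,p-1\}$ and $u_{\East}, u_{\West} \in \{0,\ldots,q-1\}$. Reading this equation modulo $p$ and modulo $q$ gives $u_{\South} \equiv n \pmod p$ and $u_{\West} \equiv n \pmod q$; conversely, the digits $u_{\East}$ and $u_{\North}$ are recovered as the quotients $\lfloor n/p\rfloor$ and $\lfloor n/q\rfloor$. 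Thus the desired diagonal map exists on the full product $\{0,\ldots,q-1\}\times\{0,\ldots,p-1\}$ if and only if every pair of residues $(u_{\South}, u_{\West})$ is realized by a unique $n \in \{0, \ldots, pq-1\}$.

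The key step is then to invoke the Chinese remainder theorem, whose hypothesis $\gcd(p,q)=1$ is precisely what guarantees that the map $n \mapsto (n \bmod p,\, n \bmod q)$ is a bijection from $\{0,\ldots,pq-1\}$ onto $\{0,\ldots,p-1\}\times\{0,\ldots,q-1\}$. Given input $(u_{\West}, u_{\South})$, the CRT produces a unique $n$ with the prescribed residues, and I would then define
\[
\psi_{q,p}^{\nearrow}(u_{\West}, u_{\South}) := \bigl(\lfloor n/q \rfloor,\, \lfloor n/p \rfloor\bigr) \in \{0,\ldots,p-1\}\times\{0,\ldots,q-1\}.
\]
The forward implication of the stated equivalence is true by construction, and the reverse is immediate: any $(u_{\North}, u_{\East})$ satisfying $(u_{\West}, u_{\North}) = \psi_{p,q}(u_{\South}, u_{\East})$ gives an integer $u_{\South} + p u_{\East} = u_{\West} + q u_{\North}$ with the same residues modulo $p$ and $q$, which by CRT must coincide with $n$, forcing the digits to agree.

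I do not foresee any serious technical obstacle; the content of the proposition is essentially a dictionary between the SW-to-NE diagonal exchange and the classical isomorphism $\setZ/pq\setZ \simeq \setZ/p\setZ \times \setZ/q\setZ$. The only point worth emphasising is that coprimality enters in an essential way: without $\gcd(p,q)=1$, the map $n \mapsto (n \bmod p, n \bmod q)$ has image of cardinality $pq/\gcd(p,q) < pq$, so some pairs $(u_{\South}, u_{\West})$ would fail to come from any $n$, and the diagonal map $\psi_{q,p}^{\nearrow}$ could not be defined on the full product set.
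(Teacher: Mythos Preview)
Your proof is correct and follows essentially the same route as the paper: both invoke the Chinese remainder theorem to produce from $(u_\West,u_\South)$ the unique $n\in\{0,\ldots,pq-1\}$ with the prescribed residues, and then read off $(u_\North,u_\East)$ as the quotients in the two decompositions $n=u_\West+qu_\North=u_\South+pu_\East$. Your write-up is in fact a bit more explicit than the paper's in verifying both directions of the equivalence and in explaining why coprimality is indispensable.
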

\begin{proof}
	For any $(u_\West,u_\South)\in \{0,1,\ldots,q-1\}\times \{0,1,\ldots,p-1\}$, the Chinese remainder theorem states there exists a unique $n\in\{0,1,\ldots,pq-1\}$ such that $n \equiv u_\West \mod q$ and $n\equiv u_\South \mod p$. The integer can then be decomposed in the two \mrbs $(p,q)$ and $(q,p)$ along
	\[
		n = u_\West +q v' = u_\South + p u'
	\] with $0\leq v'<p$ and $0\leq u'<q$. We now define $\psi^{\nearrow}_{q,p}(u,v)=(v',u')$. Using the same decomposition, we also have  $(u_\West,v')=\psi_{p,q}(u_\South,u')$ and one checks directly the equivalence and unicity.
\end{proof}
Consequently, given a rectangle $R(l_1,l_2)$, the full decoration can be constructed when $\gcd(p,q)=1$ from the decorations on the left and bottom sides using $\psi^{\nearrow}_{q,p}$ and all the algorithms above can be adapted in order to produce all the decompositions along \mrbs in $\ca{B}_{p,q}(l_1,l_2)$ of the unique integer $0\leq n<p^{l_1}q^{l_2}$ such that 
	\begin{align*}
		n&\equiv \sum_{i=0}^{l_1-1} a_i p^i \mod p^{l_1} 
		&
		n&\equiv \sum_{j=0}^{l_2-1} a'_j q^j\mod q^{l_2} 
	\end{align*}

\section{Yang-Baxter equation and change of \mrbs}

\label{sec:YB}
In the previous section~\ref{sec:algo}, we have explored the algorithmic aspects of the conversion between \emph{two} homogeneous bases, both in the polynomial case and in the number case. We now consider the case of three homogeneous bases and establish a relation with the set-theoretic Yang-Baxter equation. We will see that, in the polynomial case, we recover classical "braiding" properties of triangular matrices whereas, for integers, a new class of integrable systems of statistical mechanics emerges.

\subsection{An overview of quantum and set-theoretic Yang-Baxter equation}

In order to formulate the quantum and set-theoretical Yang-Baxter equations, we need an additional notation. Let $A$ and $B$ be two sets and $F$ be a function $A\times B\to B\times A$. For any sequence of spaces $(E_k)_{1\leq k\leq n}$, we define, for all $1\leq i\leq n$ such that $E_i=A$ and $E_{i+1}=B$, the function $\iota_i(F)$ given by
\begin{align*}
	\iota_i(F) : \prod_{i=1}^n E_k & \to E_1\timesdots E_{i-1}\times E_{i+1}\times E_i \times E_{i+2}\timesdots E_n \\
	(x_1,\ldots,x_k)  &\mapsto (x_1,\ldots,x_i,F(x_i,x_{i+1}),x_{i+2},\ldots,x_n)
\end{align*}
This definition is trivially extended to the case of tensor products instead of cartesian products.

\begin{defi}[quantum Yang-Baxter equation]
	Let $V_1$, $V_2$ and $V_3$ be three vector spaces. Three linear maps \begin{align*}
		R_{12} : V_1\otimes V_2 &\to V_2\otimes V_1 \\
		R_{13} : V_1\otimes V_3 &\to V_3\otimes V_1 \\
		R_{23} : V_2\otimes V_3 &\to V_3\otimes V_2
	\end{align*}
	satisfy the quantum Yang-Baxter equation if and only if, the following equation holds on $V_1\otimes V_2\otimes V_3$ 
	\begin{equation}\label{eq:quantumYB}
		\iota_1(R_{23})\circ \iota_2(R_{13}) \circ \iota_1(R_{12})
		=
		\iota_2(R_{12})\circ \iota_1(R_{13}) \circ \iota_2(R_{23}) 
	\end{equation}
\end{defi}
Following the definition of the embeddings $\iota_k$, one observes that the codomain of both sides of \eqref{eq:quantumYB} is $V_3\otimes V_2\otimes V_1$.

In practice, the definition with $V_1=V_2=V_3$ and $R_{12}=R_{13}=R_{23}$ is often written but, for many applications (transfer matrices, Baxter $Q$-operators, higher spin models), it is more convenient to consider the case of varying spaces $V_i$ and/or three matrices $R_{ij}$ that act in different ways on each pair of spaces. For example the three matrices $R_{ij}$ may depend on a parameter $R_{ij}= R(u_i-u_j)$ or the spaces $V_i$ can be obtained as various representation spaces of an underlying algebra $\ca{A}$ endowed with an abstract element $\ca{R}$ in $\ca{A}\otimes\ca{A}$ (or in a completion of $\ca{A}\otimes\ca{A}$), as it is the case in the construction of $R$-matrices through Drinfeld quasi-triangular Hopf algebras.

\begin{defi}[set-theoretic Yang-Baxter equation]
	Let $S_1$, $S_2$ and $S_3$ be three sets. Three maps \begin{align*}
		R_{12} : S_1\times S_2 &\to S_2\times S_1 \\
		R_{13} : S_1\times S_3 &\to S_3\times S_1 \\
		R_{23} : S_2\times S_3 &\to S_3\times S_2
	\end{align*}
	satisfy the \emph{set-theoretic Yang-Baxter equation} if and only if \eqref{eq:quantumYB} holds on $S_1\times S_2\times S_3$ (cartesian product instead of tensor products)
\end{defi}
The set-theoretic Yang-Baxter equation, which is less known than the quantum Yang-Baxter equation, takes its origin in \cite{GatevaIvanova1998} and, since then, has been the interest of various algebraic studies \cite{GatevaIvanova2008,Rump2007}.

When the sets $S_i$ are finite, any solution $R_{ij}$ of the set-theoretic Yang-Baxter equation provides a solutions $\ti{R}$ of the quantum Yang-Baxter equation on $V_i=\setK^{S_i}$ defined, for any function $f\in \setK^{S_i}\otimes\setK^{S_j}\simeq \setK^{S_i\times S_j}$ and any $(y,x)\in S_j \times S_i$, by \begin{align*}
(\ti{R}_{ij}f)(y,x) &= \sum_{(x',y')\in R_{ij}^{-1}(y,x)}  f( x',y')
\end{align*}
Whenever $R_{ij}$ is a bijection, the matrix of $\ti{R}$ in the canonical basis is a permutation matrix and $\ti{R}_{ij}f = f\circ \ti{R}_{ij}^{-1}$.

\subsection{Set-theoretical Yang-Baxter for numeration basis}
We now formulate the second main theorem of the present paper on the functions $\psi_{p,q}$ . 

\begin{theo}[Yang-Baxter for \mrbs]\label{theo:YBnumbers}
For any integers $p_1$, $p_2$ and $p_3$ strictly larger than $1$, the three maps $(\psi_{p_1,p_2},\psi_{p_1,p_3},\psi_{p_2,p_3})$ satisfy the set-theoretic Yang Baxter equation
\begin{equation}\label{eq:YBpsi}
\iota_1(\psi_{p_2,p_3})\circ \iota_2(\psi_{p_1,p_3}) \circ \iota_1(\psi_{p_1,p_2}) =\iota_2(\psi_{p_1,p_2})\circ \iota_1(\psi_{p_1,p_3}) \circ 	\iota_2(\psi_{p_2,p_3}).
\end{equation}
on the three sets $S_i=\{0,1,\ldots,p_i-1\}$, $1\leq i\leq 3$.
\end{theo}
\begin{proof}
	This is again a consequence of lemma~\ref{lemm:fundanumbers}. We consider the \mrb given by $\gr{b}=(p_1,p_2,p_3)$ which encodes the decomposition of any number $0\leq n<p_1p_2p_3$ and is endowed with an action of $\Sym{3}$. We consider the permutation $c(i)=4-i$, which relates the two decompositions in $\gr{b}$ and $c(\gr{b})$:
	\begin{align*}
		n &= a_0 +a_1 p_1 + a_2 p_1p_2 & (0\leq a_i<p_{i+1})
		\\
		n &= a'_0 +a'_1 p_3 + a'_2 p_3p_2 & (0\leq a'_i<p_{3-i})
	\end{align*}
	The two braid decompositions $c= \tau_{12}\circ\tau_{23}\circ \tau_{12}$ and $c=\tau_{23}\circ \tau_{12}\circ \tau_{23}$ of $c$ provide the following two sequences of \mrbs
	\begin{align*}
		\gr{b} =(p_1,p_2,p_3) &\mapsto \tau_{12}(\gr{b})=(p_2,p_1,p_3) \mapsto \tau_{23}\circ \tau_{12}(\gr{b})=(p_2,p_3,p_1) \mapsto 
		c(\gr{b})=(p_3,p_2,p_1) 
		\\
		\gr{b} =(p_1,p_2,p_3) &\mapsto \tau_{23}(\gr{b})=(p_1,p_3,p_2) \mapsto \tau_{12}\circ \tau_{23}(\gr{b})=(p_3,p_1,p_2) \mapsto 
		c(\gr{b})=(p_3,p_2,p_1) 
	\end{align*}
	
	At the level of decompositions on the \mrbs, lemma~\ref{lemm:fundanumbers} then provides the following two sequences of transformations
	\begin{align*}
		S_1\times S_2\times S_3 &
		\xrightarrow{\iota_1(\psi_{p_1,p_2})} S_2\times S_1\times S_3 
		\xrightarrow{\iota_2(\psi_{p_1,p_3})} S_2\times S_3\times S_1 
		\xrightarrow{\iota_1(\psi_{p_2,p_3})} S_3\times S_2\times S_1 
		\\
		S_1\times S_2\times S_3 &
		\xrightarrow{\iota_2(\psi_{p_2,p_3})} S_1\times S_3\times S_2 
		\xrightarrow{\iota_1(\psi_{p_1,p_3})} S_3\times S_1\times S_2
		\xrightarrow{\iota_2(\psi_{p_1,p_2})} S_3\times S_2\times S_1 
	\end{align*}
	which decomposes the same map associated to $c$. We thus have the expected set-theoretic Yang-Baxter equation \eqref{eq:YBpsi}.
\end{proof}
There is a graphical interpretation of this Yang-Baxter equation, that we can embed in a larger set of \mrbs. Let $\gr{b}=(p_1,\ldots,p_N)$ be a finite \mrb. We now generalize the set introduced~\eqref{eq:interpmrb} to an arbitrary sequence of numbers $p_i$ and integers $l_i\geq 1$ through:
\[
\ca{B}_{p_1,\ldots,p_N}(l_1,\ldots,l_N) = \left\{\gr{b}\in \{p_1,\ldots,p_N\}^{L} ; \forall j\in\{1,\ldots,N\}, |\{1\leq i\leq L ; b_i=p_j\}|=l_j \right\}
\]
where $L=\sum_{i=1}^N l_i$. To any oriented path $\gamma_{\gr{b}}$ of length $L$ in $\prod_{i=1}^N \{0,\ldots,l_i\}$ with only length-one increasing increments, we associated a \mrb in $\ca{B}_{p_1,\ldots,p_N}(l_1,\ldots,l_N)$ by defining $b_i=p_j$ if the $i$-th increment in in direction $j$. This correspondence is bijective whenever the $p_i$ are distinct integers. The symmetric group $\Sym{L}$ acts on the paths by permuting the order of the increments and this action is compatible with the one on the \mrbs. A path is invariant under $\sigma\in\Sym{L}$ whenever each cycles of $\sigma$ acts on increments along the same dimension. From this point of view, any path in $\prod_{i=1}^N \{0,\ldots,l_i\}$ can be viewed as the projection of a path $\ca{B}_{\gr{b'}}(1,\ldots,1)$ where $\gr{b'}$ is obtained by repeating $l_i$ times each element $p_i$. For example, a square $\{0,1\}^2$ with $(p_1,p_2)=(p,p)$ is flattened to $\{0,1,2\}$. This same projective structure can be extended at the level of decompositions along \mrbs by observing that $\psi_{p,p}(x,y)=(y,x)$.

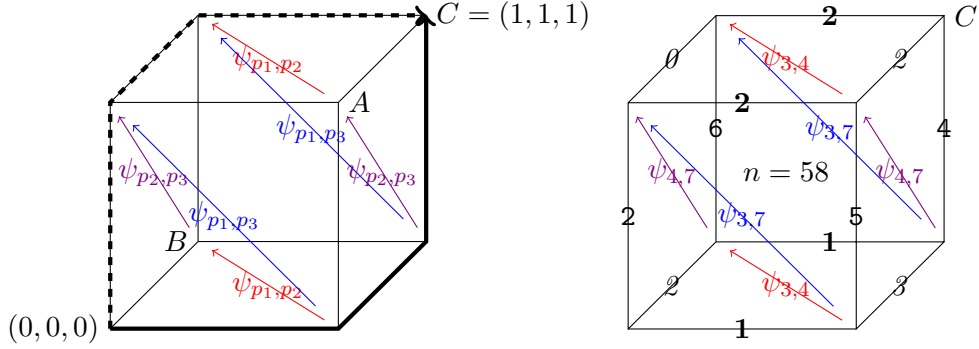
\begin{figure}
	\begin{center}
	\begin{tikzpicture}[scale=3,baseline={(current bounding box.center)}]
		\draw (0,0,0) -- (0,0,1)--(0,1,1) -- (1,1,1);
		\draw (0,0,0) -- (0,1,0)--(1,1,0) -- (1,1,1);
		\draw (0,0,0) -- (1,0,0)--(1,0,1) -- (1,1,1);
		\draw (0,0,1)--(1,0,1)  (0,1,0)--(0,1,1) (1,0,0)--(1,1,0);
		\node at (0,0,1) [left] {$(0,0,0)$};
		\node at (1,1,0) [right] {$C=(1,1,1)$};
		\node at (1,1,1) [right] {$A$};
		\node at (0,0,0) [left] {$B$};
		\draw[->,red] (0.9,0.,0.9) -- node [midway] {$\psi_{p_1,p_2}$} (0.1,0.,0.1);
		\draw[->,red] (0.9,1,0.9) -- node [midway] {$\psi_{p_1,p_2}$} (0.1,1,0.1);
		\draw[->,blue] (0.9,0.1,0)-- node [midway] {$\psi_{p_1,p_3}$} (0.1,0.9,0);
		\draw[->,blue] (0.9,0.1,1)-- node [midway] {$\psi_{p_1,p_3}$} (0.1,0.9,1);
		\draw[->,violet] (1,0.1,0.1)-- node [midway] {$\psi_{p_2,p_3}$} (1,.9,.9);
		\draw[->,violet] (0,0.1,0.1)-- node [midway] {$\psi_{p_2,p_3}$} (0,.9,.9);
		\draw[->,ultra thick] (0,0,1)--(1,0,1)--(1,0,0)--(1,1,0);
		\draw[->,ultra thick,dashed] (0,0,1)--(0,1,1)--(0,1,0)--(1,1,0);
	\end{tikzpicture}
	\begin{tikzpicture}[scale=3,baseline={(current bounding box.center)}]
	\draw (0,0,0) -- (0,0,1)--(0,1,1) -- (1,1,1);
	\draw (0,0,0) -- (0,1,0)--(1,1,0) -- (1,1,1);
	\draw (0,0,0) -- (1,0,0)--(1,0,1) -- (1,1,1);
	\draw (0,0,1)--(1,0,1)  (0,1,0)--(0,1,1) (1,0,0)--(1,1,0);
	\draw[->,red] (0.9,0.,0.9) -- node [midway] {$\psi_{3,4}$} (0.1,0.,0.1);
	\draw[->,red] (0.9,1,0.9) -- node [midway] {$\psi_{3,4}$} (0.1,1,0.1);
	\draw[->,blue] (0.9,0.1,0)-- node [midway] {$\psi_{3,7}$} (0.1,0.9,0);
	\draw[->,blue] (0.9,0.1,1)-- node [midway] {$\psi_{3,7}$} (0.1,0.9,1);
	\draw[->,violet] (1,0.1,0.1)-- node [midway] {$\psi_{4,7}$} (1,.9,.9);
	\draw[->,violet] (0,0.1,0.1)-- node [midway] {$\psi_{4,7}$} (0,.9,.9);
	\node at (0.5,0.5,0.5) {$n=58$};
	\node at (1,1,0) [right] {$C$};
	\node at (0.5,0,1) {$\mathbf{1}$};
	\node at (0.5,0,0) {$\mathbf{1}$};
	\node at (0.5,1,1) {$\mathbf{2}$};
	\node at (0.5,1,0) {$\mathbf{2}$};
	\node at (1,0,0.5) {$\mathit{3}$};
	\node at (0,0,0.5) {$\mathit{2}$};
	\node at (1,1,0.5) {$\mathit{2}$};
	\node at (0,1,0.5) {$\mathit{0}$};
	\node at (0,0.5,1) {$\mathtt{2}$};
	\node at (0,0.5,0) {$\mathtt{6}$};
	\node at (1,0.5,1) {$\mathtt{5}$};
	\node at (1,0.5,0) {$\mathtt{4}$};
	\end{tikzpicture}
	\end{center}
	\caption{\label{fig:cubeforYB}Cube associated to $\ca{B}_{p_1,p_2,p_3}(1,1,1)$: the maps $\psi_{p_i,p_j}$ are associated to faces in the dimensions $i$ and $j$. The two extremal paths $\gamma_{p_1,p_2,p_3}$ and $\gamma_{p_3,p_2,p_1}$ are represented respectively by the solid and the dashed lines. On the right, decomposition of $n=59<84=3\cdot 4\cdot 7$ in the \mrbs associated to $(p_1,p_2,p_3)=(\mathbf{3},\mathit{4},\mathtt{7})$ (same font for digits on the edges).}
\end{figure}

The Yang-Baxter equation~\eqref{eq:YBpsi} can be represented on the cube $\{0,1\}^3$ as in figure~\ref{fig:cubeforYB}. The two extremal paths $\gamma_{p_1,p_2,p_3}$ and $\gamma_{p_3,p_2,p_1}$ can be related by face transposition either by using the point $A$ or the point $B$: each sequence of face transposition corresponds to a member of \eqref{eq:YBpsi}. The two members of \eqref{eq:YBpsi} corresponds to the same cube seen from point $A=(1,1,0)$ and point $B=(0,0,1)$ as follows:
\[
\begin{tikzpicture}[baseline={(current bounding box.center)}]
	\draw[->,ultra thick] (0,0) --(1,-1) --(3,-1) --(4,0);
	\draw[->,ultra thick, dashed] (0,0) -- (1,1)--(3,1)--(4,0);
	\draw (1,-1)--(2,0) (1,1)--(2,0) (2,0)--(4,0);
	\node at (2,0) [above] {$A$};
	\draw[->,violet] (2.9,-0.9) -- node [midway] {$\psi_{p_2,p_3}$} (2.1,-0.1); 
	\draw[->,blue] (1,-0.85) -- node [midway] {$\psi_{p_1,p_3}$} (1,0.85);
	\draw[->,red] (2.1,0.1) -- node [midway] {$\psi_{p_1,p_2}$} (2.9,0.9);
\end{tikzpicture}
=
\begin{tikzpicture}[xscale=-1,baseline={(current bounding box.center)}]
	\draw[<-,ultra thick] (0,0) --(1,-1) --(3,-1) --(4,0);
	\draw[<-,ultra thick, dashed] (0,0) -- (1,1)--(3,1)--(4,0);
	\draw (1,-1)--(2,0) (1,1)--(2,0) (2,0)--(4,0);
	\node at (2,0) [above] {$B$};
	\draw[->,red] (2.9,-0.9) -- node [midway] {$\psi_{p_1,p_2}$} (2.1,-0.1); 
	\draw[->,blue] (1,-0.85) -- node [midway] {$\psi_{p_1,p_3}$} (1,0.85);
	\draw[->,violet] (2.1,0.1) -- node [midway] {$\psi_{p_2,p_3}$} (2.9,0.9);
\end{tikzpicture}
\]
Moreover, each path on the cube or in one of the hexagons above corresponds to the decomposition of the \emph{same} integer $0\leq n <p_1p_2p_3$ in the suitable \mrb as illustrated in figure~\ref{fig:cubeforYB}.

\subsection{Set-theoretical Yang-Baxter equation for polynomials} 

In the context of polynomials, the same theorem as theorem~\ref{theo:YBnumbers} can be formulated with the same mrthod of proof and it is interesting to notice that the set-theoretic Yang-Baxter equation below corresponds to an already well-known braid-like equation.

\begin{theo}[Yang-Baxter for polynomial \mrbs]\label{theo:YBpolynomial}
	For any complex numbers $a_1$, $a_2$ and $a_3$, the three maps $(\phi_{a_1,a_2},\phi_{a_1,a_3},\phi_{a_2,a_3})$ introduced in \eqref{eq:def:phi} satisfy the set-theoretic Yang Baxter equation
	\begin{equation}\label{eq:YBphi}
		\iota_1(\phi_{a_2,a_3})\circ \iota_2(\phi_{a_1,a_3}) \circ \iota_1(\phi_{a_1,a_2}) =\iota_2(\phi_{a_1,a_2})\circ \iota_1(\phi_{a_1,a_3}) \circ 	\iota_2(\phi_{a_2,a_3}).
	\end{equation}
	on the sets $S_1=S_2=S_3=S=\setK$.
\end{theo}

Interpreted on $S^3=\setK^3$, \eqref{eq:YBphi} corresponds to the well-known equation
\begin{equation}
	\begin{split}
	&\begin{pmatrix}
		1 & 0 & 0 
		\\
		0 & 1 & a_2-a_3
		\\ 
		0 & 0 & 1
	\end{pmatrix}
	\begin{pmatrix}
	1 & 0 & a_3-a_1
	\\
	0 & 1 & 0
	\\ 
	0 & 0 & 1
	\end{pmatrix}
	\begin{pmatrix}
	1 & a_2-a_1 & 0 
	\\
	0 & 1 & 0
	\\ 
	0 & 0 & 1
	\end{pmatrix}
	\\
	&=
	\begin{pmatrix}
	1 & a_2-a_1 & 0 
	\\
	0 & 1 & 0
	\\ 
	0 & 0 & 1
\end{pmatrix}
	\begin{pmatrix}
	1 & 0 & a_3-a_1
	\\
	0 & 1 & 0
	\\ 
	0 & 0 & 1
\end{pmatrix}
	\begin{pmatrix}
	1 & 0 & 0 
	\\
	0 & 1 & a_2-a_3
	\\ 
	0 & 0 & 1
\end{pmatrix}
\end{split}
\end{equation}
inherited from $SL_3(\setK)$. We may now consider bases in $\ca{B}_{a_1,\ldots,a_N}(l_1,\ldots,l_N)$. The coefficients of a polynomial in any of \mrbs in this set now belongs in $\setK^N$: a change of \mrbs is then encoded by a linear transformation of the coefficients in $SL_N(\setK)$.

\section{An example of application: the Furstenberg conjecture}
\label{sec:examples}

The reader may wonder about the practical interest of \mrbs. We provide in the present section various computations around the Furstenberg conjecture for which \mrbs provide a natural representation. In particular, we see how it is related to a construction of shifts by Rudolph and how the set-theoretic Yang-Baxter equation can be used concretely to recover the invariant finite-support atomic measures as well as the invariant Lebesgue measure.

\subsection{Furstenberg's conjecture and Rudolph's approach}
\paragraph*{\Mrbs for real numbers.}

We now provide an alternative interpretation of decorations on rectangles. Let $\gr{b}$ be a mixed radix basis. For any real number $x\in[0,1)$, there exists an infinite sequence $(a_i)_{i\geq 0}$ such that, for all $i\geq 0$, $0\leq a_i<b_{i+1}$ and 
\[
x = \sum_{i=0}^{\infty} \frac{a_i}{b_{1}\ldots b_{i+1}}=\sum_{i=0}^{\infty} \frac{a_i}{\pi_{i+1}^\gr{b}}
\]
The sequence can be constructed by considering the following recursion initiated at $x_0=x$:
\[
\begin{cases}
	a_i &= \lfloor b_{i+1} x_i \rfloor
	\\
	x_{i+1} &= b_{i+1} x_i-\lfloor b_{i+1} x_i \rfloor = \{b_{i+1} x_i\}
\end{cases}
\]
for all $i\geq 0$. This correspondence is bijective whenever one forbids sequences for which we have $a_i=b_{i+1}-1$ for all sufficiently large $i$. As for integers and polynomials, decompositions along \mrbs are related by local moves and the interesting fact is that these moves are the same as the ones presented in theorem~\ref{lemm:fundanumbers}. Indeed, let $k\geq 1$. The same number $x$ can be decomposed in $\gr{b}$ and $\tau_{k,k+1}(\gr{b})$ with:
\begin{align*}
	 x & = \sum_{i=0}^{k-2} \frac{a_i}{\pi_{i+1}^\gr{b}} + \frac{a_{k-1}}{\pi_{k-1}^{\gr{b}} b_k } + \frac{a_{k}}{\pi_{k}^{\gr{b}}b_{k}b_{k+1}  } +\sum_{i>k} \frac{a_i}{\pi_{i+1}^\gr{b}}
	 \\
	  & = \sum_{i=0}^{k-2} \frac{a_i}{\pi_{i+1}^\gr{b}} + \frac{a'_{k-1}}{\pi_{k-1}^{\gr{b}}b_{k+1}  } + \frac{a'_{k}}{\pi_{k}^{\gr{b}}b_{k+1}b_k  } +\sum_{i>k} \frac{a_i}{\pi_{i+1}^\gr{b}}
\end{align*} 
with the constraint that $a_{k-1} b_{k+1} + a_k = a'_{k-1} b_k +a'_{k}$. This corresponds precisely to
\begin{equation}
	(a_{k},a_{k-1}) = \psi_{b_k,b_{k+1}}(a'_k,a'_{k-1})
\end{equation}
The same machinery of decorations can be performed as for integers but orientations have to be reversed.

We consider $\setZ_- = \{0,-1,-2,\ldots\}$ and the unoriented graph $\setZ_-^2$ with a set $E_-$ of edges given by $\{(k-1,l),(k,l)\}\sim (k-1/2,l)$ and $\{(k,l-1),(k,l)\}\sim (k,l-1/2)$ for $(k,l)\in\setZ-^2$. A $(p,q)$-decoration is a function $\beta: E_-\to \setN\}$ such that, for all $(k,l)\in\setZ_-^2$,
\[
0\leq \beta( k-1/2,l ) <p 	\qquad \text{and}\qquad 0\leq \beta( k,l-1/2 ) <q
\]
A decoration is $\psi_{p,q}$-compatible if and only if, for all $(k,l)\in\setZ_-^2$,
\[
(\beta( k-1,l-1/2\}), \beta( k-1/2,l ) 
=
\psi_{p,q}\left( \beta( k-1/2,l-1 ),  \beta( k,l-1/2)    \right)
\]
We note $\gr{D}^{\South\West}(\psi_{p,q})$ the set of of $\psi_{p,q}$-compatible decorations on $\setZ_-^2$.

An $\South\West$-oriented infinite path is a path $\gamma : \setN \to \setZ_-^2$ such that $\gamma(0)=(0,0)$ and, for all $n\in\setN$,
\[
\gamma(n+1)-\gamma(n)  \in \{ (-1,0), (0,-1) \}
\]
There is a bijection between $\South\West$-oriented infinite paths and \mrbs $\gr{b}\in \{p,q\}^{\setN^*}$ with $\gamma_{\gr{b}}(n)-\gamma_{\gr{b}}(n-1)=(-1,0)$ if and only if $b_n=p$.

A $\psi_{p,q}$-compatible decoration $\beta$ is \emph{forbidden} if there exists $\gr{b}$ such that, for sufficiently large $i$, $\beta(\{\gamma_{\gr{b}}(i),\gamma_{\gr{b}}(i-1)\})=b_{i+1}-1$. We note $\gr{D}^{\South\West}_*(\psi_{p,q})$ the set of decorations $\beta\in \gr{D}^{\South\West}(\psi_{p,q})$ that are not forbidden.

\begin{prop}\label{prop:decorationreal}
There is a bijection $B: [0,1) \to \gr{D}^{\South\West}_*(\psi_{p,q})$, $x\mapsto B_x$ such that, for any $x\in[0,1)$ and any \mrb $\gr{b}\in\{p,q\}^{\setN^*}$,
\[
n  = \sum_{i\geq 0} \frac{B_x( \{\gamma_\gr{b}(i),\gamma_\gr{b}(i+1)\} ) }{\pi_{i+1}^{\gr{b}}}
\]
\end{prop}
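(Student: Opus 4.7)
The approach is to mimic the proof of Theorem~\ref{theo:2Drep:numbers} in the infinite quadrant $\setZ_-^2$, using the real-number analogue of Lemma~\ref{lemm:fundanumbers} (the local transposition formula recalled just above the proposition) in place of its integer version and relying on the convergence of the defining series, guaranteed by $\pi_i^\gr{b}\geq 2^i$.

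For the construction of $B_x$, I observe that every edge $e\in E_-$ has two endpoints at $\ell^1$-distances $d$ and $d+1$ from $(0,0)$; the value $i_e:=d$ coincides with the position of $e$ inside any path $\gamma_\gr{b}$ that traverses $e$. I define $B_x(e)$ as the $i_e$-th canonical digit of $x$ in any such $\gr{b}$. Well-definedness follows as in the integer case: two admissible bases have length-$(i_e+1)$ prefixes that are permutations of one another with identical last entry (forced by the orientation of $e$), and any adjacent transposition $\tau_{k,k+1}$ with $k+1\leq i_e$ only affects digits of index strictly below $i_e$ by the real analogue of Lemma~\ref{lemm:fundanumbers}. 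To obtain $\psi_{p,q}$-compatibility, for each elementary square I pick two MRBs whose paths traverse it via $(\South,\East)$ and via $(\West,\North)$ respectively; they differ by a single adjacent transposition, so the desired identity $(B_x|_\West,B_x|_\North)=\psi_{p,q}(B_x|_\South,B_x|_\East)$ follows immediately. The non-forbidden property is automatic from the fact that the canonical digit algorithm applied to $x\in[0,1)$ never produces an eventually-saturated tail: a telescoping computation of the form $\sum_{i\geq N}(b_{i+1}-1)/\pi_{i+1}^\gr{b}=1/\pi_N^\gr{b}$ would otherwise force a tail fractional part to equal $1$. The displayed series identity holds by construction, since $B_x$ restricted to $\gamma_\gr{b}$ is the canonical decomposition of $x$ in $\gr{b}$.

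Injectivity is immediate by reading $B_x=B_y$ along the south axis (i.e., along the constant-$p$ path). For surjectivity, given $\beta\in\gr{D}^{\South\West}_*(\psi_{p,q})$, I define $x\in[0,1)$ from the non-forbidden constant-$p$ sequence obtained by reading $\beta$ along the south axis, and I need to show $\beta=B_x$. For every MRB $\gr{b}$, the real number $y_\gr{b}:=\sum_{i\geq 0}\beta(\{\gamma_\gr{b}(i),\gamma_\gr{b}(i+1)\})/\pi_{i+1}^{\gr{b}}$ lies in $[0,1)$ since $\beta|_{\gamma_\gr{b}}$ is a non-forbidden canonical expansion; by $\psi_{p,q}$-compatibility and the real-number Lemma~\ref{lemm:fundanumbers}, $y_\gr{b}$ is invariant under every adjacent transposition of the entries of $\gr{b}$, hence under every finite permutation. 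Combining this orbit-invariance with the product-topology continuity of $\gr{b}\mapsto y_\gr{b}$ (tails dominated by $\sum_{i\geq N}2^{-i}$) through an approximation by bases that agree with $\gr{b}$ on longer and longer prefixes and are finitely connected to the constant-$p$ basis within enlarged windows, one obtains $y_\gr{b}=x$ for every $\gr{b}$. Hence along every path $\beta$ realizes the canonical expansion of $x$ in $\gr{b}$, so $B_x=\beta$.

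The delicate step is the last one: the action of $\mathfrak{S}_\infty=\varinjlim\mathfrak{S}_N$ on $\{p,q\}^{\setN^*}$ does not connect MRBs differing in infinitely many positions, so the orbit-invariance of $y_\gr{b}$ is insufficient on its own and must be combined with the topological approximation above. This subtlety has no counterpart in the finite-rectangle Theorem~\ref{theo:2Drep:numbers}, whose bijection is entirely controlled by the finite symmetric group $\mathfrak{S}_{l_1+l_2}$; its resolution here is what distinguishes the real-number setting from the integer one.
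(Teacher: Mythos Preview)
Your approach is exactly the one the paper intends (the paper's own ``proof'' is just the sentence ``follows the same steps as for integers and polynomials and is not presented here''), and you go considerably further by actually writing out those steps and by isolating the one genuinely new feature of the infinite setting: the $\mathfrak{S}_\infty$-orbit structure on $\{p,q\}^{\setN^*}$ is too coarse to connect an arbitrary $\gr{b}$ to the constant-$p$ basis.

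Your resolution of this point is correct in spirit, but the key sentence is imprecise as written. No sequence $\gr{b}^{(N)}$ can simultaneously agree with $\gr{b}$ on arbitrarily long prefixes \emph{and} be ``finitely connected to the constant-$p$ basis'' in the sense of lying in its $\mathfrak{S}_\infty$-orbit, since that orbit is a singleton. What makes the argument go through is a two-stage limit: take $\gr{b}^{(N)}=(b_1,\dots,b_N,p,p,\dots)$, which agrees with $\gr{b}$ on the prefix of length $N$ and contains only finitely many $q$'s; then, for each fixed $N$, sort the first $M$ entries of $\gr{b}^{(N)}$ so that all $p$'s come first. The sorted basis shares its first $M-k_N$ edges with the constant-$p$ path (where $k_N\le N$ is the number of $q$'s among $b_1,\dots,b_N$), hence $|y_{\gr{b}^{(N)}}-x|<p^{-(M-k_N)}$ for every $M$, giving $y_{\gr{b}^{(N)}}=x$. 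Then let $N\to\infty$ and use your continuity estimate. This is presumably what you mean by ``within enlarged windows''; making the double limit explicit removes the ambiguity and completes the proof.
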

\begin{proof}
	The proof follows the same steps as for integers and polynomials and is not presented here.
\end{proof}

An application of the representation $B_x$ of a real number $x$ in \emph{all} \mrbs in $\{p,q\}^{\setN^*}$ simultaneously can be found in the study of the Furstenberg conjecture stated in \cite{Furstenberg_1967} about dynamical systems.

Let $(E,\ca{E})$ be a measurable set $(E,\ca{E})$ and $T: E\to E$ a measurable map. A measure $\mu$ on $(E,\ca{E})$ is \emph{invariant} under $T$ if and only if, for all $A\in\ca{E}$, $\mu( T^{-1}(A)) =\mu(A)$.

\begin{conj}[Furstenberg]\label{conj:Furstenberg}
Let $p$ and $q$ be integers strictly larger than $1$ with $\log p/\log q\notin \setQ$. The only probability measures on $\setR/\setZ$ (with its Borel $\sigma$-algebra) \emph{invariant} under the two multiplication maps $T_p (x) = px \mod 1$ and $T_q(x)=qx\mod 1$ are either atomic measures with finite support or the Lebesgue measure. 
\end{conj}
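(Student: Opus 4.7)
The Furstenberg conjecture is notoriously open, so what I can propose is only a strategy adapted to the \mrb framework developed in the paper, not a complete proof. The natural plan is to lift a probability measure $\mu$ on $\setR/\setZ$ to a probability measure $\nu$ on the space $\gr{D}^{\South\West}_*(\psi_{p,q})$ of $\psi_{p,q}$-compatible decorations via the bijection $B$ of Proposition~\ref{prop:decorationreal}, and then translate the two invariances into symmetries of $\nu$. Concretely, the action of $T_p$ on $x$ shifts the decoration along the $\gr{b}^{\South\East}$ extremal path by removing the leftmost horizontal edge (a $p$-digit) and relabeling; likewise $T_q$ shifts along $\gr{b}^{\North\West}$ by removing the topmost vertical edge. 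Invariance of $\mu$ under both $T_p$ and $T_q$ then gives invariance of $\nu$ under two commuting one-step shifts along the two boundary directions of the quadrant $\setZ_-^2$.

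The second step exploits the Yang--Baxter equation. By Theorem~\ref{theo:YBnumbers} and the underlying action of the symmetric group on \mrbs, the decoration $B_x$ can be equivalently read along any monotone path in $\setZ_-^2$, so $\nu$ is determined by its restriction to any such boundary. Translation invariance along the two borders, combined with the local $\psi_{p,q}$-compatibility of $\nu$-typical decorations, should extend to a translation invariance of the lifted measure on the full plane $\setZ^2$, interpreted as a Gibbs measure for the ``vertex model'' whose local Boltzmann weights are indicator functions of $\psi_{p,q}$. The strategy is then to classify translation-invariant, $\psi_{p,q}$-compatible probability measures on this two-dimensional lattice: one hopes the rigidity coming from the set-theoretic Yang--Baxter equation forces $\nu$ to be either (i) the uniform product measure, corresponding to Lebesgue on $\setR/\setZ$, or (ii) supported on periodic configurations, which correspond to rational $x$ and hence to finite-support atomic $T_p,T_q$-invariant measures.

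The third step would be to recover Rudolph's theorem inside this picture: positive entropy of $\mu$ with respect to $T_p$ or $T_q$ translates to positive entropy of the one-dimensional marginal of $\nu$ along the corresponding border, and one would try to show (following the spirit of Rudolph's mixing argument, but now geometrically on $\setZ_-^2$) that positive entropy on one border, combined with Yang--Baxter compatibility and invariance along the orthogonal border, propagates to maximal entropy everywhere, forcing $\nu$ to be the Bernoulli measure.

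The main obstacle is the zero-entropy case, which is precisely where all prior approaches (Rudolph, Host, Lindenstrauss, Meiri) stall. The Yang--Baxter reformulation does not, by itself, dissolve this difficulty: a translation-invariant Gibbs measure for the $\psi_{p,q}$ vertex model with vanishing entropy density along one boundary need not a priori be supported on periodic configurations, and ruling out exotic ``pseudo-rigid'' measures that saturate the Yang--Baxter constraint without being periodic or Bernoulli is the hard step. A realistic outcome of the plan is therefore not a resolution of the conjecture, but a reformulation in the language of two-dimensional integrable statistical mechanics, in which the missing input is a classification of extremal translation-invariant Gibbs measures for the $\psi_{p,q}$ model with zero entropy; such a classification would be a new result of independent interest, and would imply Furstenberg's conjecture by the above reduction.
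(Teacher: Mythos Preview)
You correctly recognize that the statement is an open conjecture and that no proof is available; the paper does not prove it either. Your reformulation---push $\mu$ forward to a measure $\nu$ on $\gr{D}^{\South\West}_*(\psi_{p,q})$ via $B$, translate $T_p,T_q$-invariance into invariance under the two commuting shifts $\sigma_1,\sigma_2$, and then try to classify translation-invariant Gibbs measures for the $\psi_{p,q}$ vertex model---matches the paper's own reformulation almost exactly (Propositions~\ref{prop:decorationreal} and~\ref{prop:fromTtoshifts}, and the discussion of $\MarkovWeight{R}_{p,q}$ at the end of Section~\ref{sec:examples}). One small imprecision: $T_p$ and $T_q$ act as global shifts $\sigma_1$ and $\sigma_2$ of the whole decoration on $\setZ_-^2$, not merely as shifts along the extremal paths $\gr{b}^{\South\East}$ and $\gr{b}^{\North\West}$; your conclusion (two commuting one-step shifts) is nonetheless correct.

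Where the paper goes further than your outline is in the \emph{layer-to-layer} construction of Section~\ref{sec:layer}: it introduces a third radix $r$ and embeds the $(p,q)$ quadrant as the first layer of a three-dimensional corner in $\ca{B}_{p,q,r}$. The Yang--Baxter moves \eqref{eq:YBlayer1}--\eqref{eq:YBlayer2} then become a transformation sending a shift-invariant measure on one layer (together with a law on the transverse edge) to a shift-invariant measure on the next layer. Both families of conjectured measures are then recovered as special points of this layer map: finite-support atomic measures arise from a trivial (all-zero) deeper layer plus a uniform transverse digit in the finite orbit, while Lebesgue is a fixed point of the layer-to-layer map. The paper's proposed route to the conjecture is therefore not a direct Gibbs-measure classification as you suggest, but rather an analysis of this layer-to-layer dynamical system on measures, asking whether it can have fixed points or limits other than the two known ones. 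This is a genuinely different angle from your Rudolph-style entropy propagation, though both face the same zero-entropy wall you identify; the paper explicitly leaves open whether the layer picture gives any new leverage there.
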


It is easy to see that the Lebesgue measure is indeed invariant and that the invariant atomic measures with finite support are given by the uniform measures on the orbits of a rational number with a denominator prime to $p$ and $q$. The maps $T_p$ and $T_q$ can be lifted at the level of allowed decorations easily and corresponds then to shifts.

For any decoration $\beta\in \gr{D}^{\South\West}_*(\psi_{p,q})$, we define the following two horizontal and vertical shifts by
\begin{align*}
	\sigma_1(\beta)( \{(k,l),(k',l')\} ) &= \beta( \{(k-1,l),(k'-1,l')\} 
	\\
	\sigma_2(\beta)( \{(k,l),(k',l')\} ) &= \beta( \{(k,l-1),(k',l'-1)\}
\end{align*}
for any edge $\{(k,l),(k',l')\}$ of $\setZ_-^2$.

\begin{prop}\label{prop:fromTtoshifts}
	The two maps $T_p$ and $T_q$ on $\setR/\setZ$ satisfy, for all $x\in[0,1)$, 
	\begin{align}
	B(T_p(x)) &= \sigma_1(B(x)) 
	&
	B(T_q(x)) &= \sigma_2(B(x)) 
	\end{align}
\end{prop}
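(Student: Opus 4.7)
The plan is to reduce the claim to a computation along the extremal constant-radix path, using the fact that a decoration in $\gr{D}^{\South\West}_*(\psi_{p,q})$ is entirely determined by its values on a single infinite $\South\West$-oriented path (the infinite-domain analogue of lemma~\ref{lemm:filling}, which is implicit in the injectivity of $B$ in proposition~\ref{prop:decorationreal}: if two decorations coincide along one path, the sum formula forces them to encode the same real number and hence, by bijectivity, to be equal). I would prove $B(T_p(x))=\sigma_1(B(x))$ in detail; the identity for $T_q$ and $\sigma_2$ is symmetric.

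First I would check that $\sigma_1$ maps $\gr{D}^{\South\West}_*(\psi_{p,q})$ into itself. The local compatibility condition is translation-invariant, so $\sigma_1(\beta)$ remains $\psi_{p,q}$-compatible as soon as $\beta$ is. For the non-forbidden condition, I argue by contradiction: if $\sigma_1(B_x)$ is forbidden along some \mrb $\gr{b}=(b_1,b_2,\ldots)$, then $B_x$ takes the corresponding values on the shift of $\gamma_\gr{b}$ by $(-1,0)$; prepending a horizontal step yields a valid $\South\West$-oriented path from $(0,0)$ realizing the \mrb $\gr{b}'=(p,b_1,b_2,\ldots)$, along which $B_x$ is itself forbidden, contradicting $B_x\in\gr{D}^{\South\West}_*(\psi_{p,q})$.

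Next I would evaluate $\sigma_1(B_x)$ along the constant $p$ \mrb $\gr{b}^{(p)}=(p,p,\ldots)$, whose trajectory is $\gamma_{\gr{b}^{(p)}}(i)=(-i,0)$. Writing $a_i=B_x(\{(-i,0),(-i-1,0)\})$ for the $i$-th base-$p$ digit of $x$, the definition of $\sigma_1$ gives immediately
\[
\sigma_1(B_x)\bigl(\{(-i,0),(-i-1,0)\}\bigr) = B_x\bigl(\{(-i-1,0),(-i-2,0)\}\bigr) = a_{i+1}.
\]
Applying the formula of proposition~\ref{prop:decorationreal} to $\sigma_1(B_x)$ along this same path, the unique $y\in[0,1)$ with $B(y)=\sigma_1(B_x)$ satisfies $y=\sum_{i\geq 0} a_{i+1}/p^{i+1}$, which is exactly the base-$p$ expansion of $T_p(x)$. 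Therefore $\sigma_1(B_x)=B(T_p(x))$.

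The only delicate point is the preservation of the non-forbidden condition under $\sigma_1$, for which the prepending trick described above is essential; every other ingredient is a direct unfolding of the definitions of $\sigma_1$, $T_p$, and the base-$p$ expansion. The argument for $\sigma_2$ and $T_q$ is strictly parallel, obtained by exchanging the roles of horizontal and vertical directions, using the constant $q$ path $\gamma_{\gr{b}^{(q)}}(i)=(0,-i)$, and prepending a vertical step instead of a horizontal one.
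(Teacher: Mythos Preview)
Your proof is correct and follows essentially the same idea as the paper: both exploit that multiplication by $p$ modulo $1$ acts as a shift on the digit expansion in any \mrb starting with $p$, and then use the bijectivity of $B$ to conclude. The paper states this in one sentence (for an arbitrary $\gr{b}=(p,\gr{b}')$ one has $T_p(0,x_1x_2\ldots_{\gr{b}}) = 0,x_2\ldots_{\gr{b}'}$), while you specialize to the constant~$p$ path, which of course suffices. Your additional care in checking that $\sigma_1$ preserves the non-forbidden condition (via the prepending trick) fills a detail the paper leaves implicit.
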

\begin{proof}
It relies that multiplication by $p$ (resp. $q$) modulo 1 have a simple interpretation in \mrbs. For any $\gr{b}\in\{p,q\}^{\setN^*}$ with $\gr{b}=(p,\gr{b'})$ (it starts with $p$), one has
$T_p(0,x_1x_2\ldots_{\gr{b}}) = 0,x_2\ldots_{\gr{b}'}$. 
\end{proof}

Using the bijection $B$, proposition~\ref{prop:fromTtoshifts} provides an equivalence between the invariant probability measures of conjecture~\ref{conj:Furstenberg} with probability measures on $\gr{D}^{\South\West}_*(\psi_{p,q})$  (endowed with its cylindrical $\sigma$-algebra) invariant under both shifts $\sigma_1$ and $\sigma_2$. The shift invariance of such a measure $\mu$ on $\gr{D}^{\South\West}_*(\psi_{p,q})$ means that, for any sub-graph $Q_{k,l}=\{\ldots,-k-2,-k-1,-k\}\times  \{\ldots,-l-2,-l-1,-l\}$, the shift $(x,y)\mapsto (x+k,y+l)$ from $Q_{k,l}$ to $\setZ_-^2$ of the restriction to $Q_{k,l}$ of a random decoration $\beta$ under the probability measure $\mu$ has the same probability measure $\mu$ as the initial decoration $\beta$.

\paragraph*{Relation with Rudolph's construction.} Furstenberg's conjecture is still open but major steps have been realized by Rudolph \cite{Rudolph_1990}, Host \cite{Host_1995} and Parry \cite{Parry_1996}. In particular, the present construction coincide with Rudolph's two-dimensional arrays $Y\subset \Sigma^{\setN^2}$ in \cite{Rudolph_1990} through the following explicit correspondence. A decoration $\beta\in \gr{D}^{\South\West}_*(\psi_{p,q})$ is a map defined on the edges of $\setZ_-^{2}$. We consider each face of $\setZ_-^{2}$, which corresponds to an elementary square of length $1$, and label it by its centre in $(\setZ_--1/2)^2$. Each face $(i,j)$ is surrounded by four edges such that, by construction,
\[
\beta(i,j-1/2) + p\beta(i+1/2,j) = \beta(i-1/2,j)+q\beta(i,j+1/2) \in \{0,1,\ldots,pq-1\}
\]
We note $\Upsilon(i,j)$ this integer. After relabelling $(i,j)\mapsto (-i-1/2,-j-1/2)\in\setN^2$, $\Upsilon$ is precisely the array in $Y$ introduced by Rudolph and indeed we observe that the diagonal part of $\Upsilon$ provides the decomposition of $x\in [0,1)$ in the constant radix $pq$ basis, i.e. $x=\sum_{n\geq 1} \Upsilon(-n+1/2,-n+1/2) (pq)^{-n}$ and the non-diagonal parts the decompositions of $T_p^{n}T_q^{m}(x)$.

The interest of our representation is double. The first one is that it localizes information on the edges and not on the faces: this allows for the use of separating oriented paths $\gamma_\gr{b}$ as in lemma~\ref{lemm:filling}, which corresponds to a spatial Markov property suitable for a study as in \cite{Simon2023} (see below) for the construction of infinite-volume Gibbs measure related to Furstenberg's conjecture. The spatial Markov property corresponds to the following fact: a path $\gamma_{\gr{b}}$ separates $\setZ_-^2$ into two components and the decorations in the two components depend only on the decoration along the common boundary $\gamma_{\gr{b}}$ of the two components.

The second interest is that the present framework of \mrbs introduces the tool of the Yang-Baxter equation relate to integrable models. Integrable models are known to exhibit both a strong rigidity and exact computations and we expect that this rigidity is related to the one described in Furstenberg's conjecture and that the exact computations corresponds to the fact that the conjectured measures are indeed explicitly described.

We illustrate these two concepts by rewriting the invariant measures listed in conjecture~\ref{conj:Furstenberg} in terms of the set-theoretic Yang-Baxter equation described in section~\ref{sec:YB}.

\subsection{Relation of some invariant ergodic measures with Yang-Baxter equation}\label{sec:layer}
\paragraph*{The case of atomic measures with finite support.}
Let $k/r$ be a rational number in $[0,1)$ with $k$ and $r$ prime to each other and such that $r$ is prime with both $p$ and $q$. The orbit of $k/r$ under the maps $T_p$ and $T_q$ is finite and we write it as $\Orb_{p,q}(k/r)=\{k_1/r,k_2/r,\ldots,k_d/r\}$. For any $1\leq i\leq d$, we thus have $T_p(k_i)=k_{\tau_p(i)}$ and $T_q(k_i)=k_{\tau_{q}(i)}$ for some integers $\tau_p(i)$ and $\tau_q(i)$. We endow this orbit with the uniform measure. It is then an invariant and ergodic measure for $T_p$ and $T_q$.

An interesting construction consists in the introduction of the enriched \mrbs $\ca{B}_{p,q,r}=\{p,q,r\}^{\setN^*}$ with a three-dimensional corner representation on $\setZ_-^3$ of any real $x\in[0,1)$ following an easy generalization of proposition~\ref{prop:decorationreal}. In this case, an element $k_i/r \in \Orb_{p,q}(k/r)$ has the representation $0,k_i00\ldots\gr{b}$ in any basis $\gr{b}$ with $b_1=r$, i.e. only the first digit is non zero. An example is presented in figure~\ref{fig:example3513}. As illustrated in this figure, Yang-Baxter transformations cannot be applied directly from the origin since the arrows are not in the correct order. However, truncated the infinite corner to a box shape and considering the box seen from the point $A'$ as in figure~\ref{fig:example3513} allows the use of the Yang-Baxter equation.

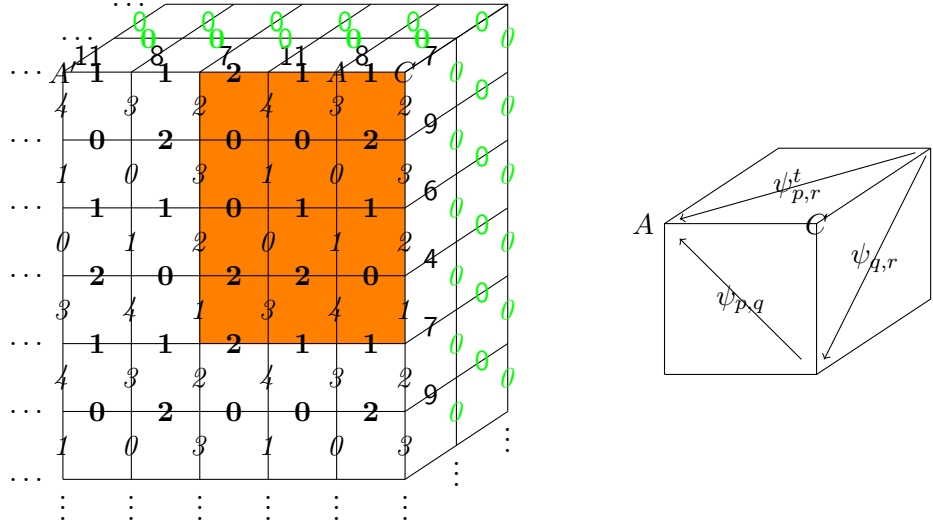
\begin{figure}
	\begin{center}
		\begin{tikzpicture}[scale=0.9,baseline={(current bounding box.center)}]
		\fill[orange,opacity=0.2] (0,0) rectangle (-3,-4);
		\foreach \y in {0,-1,...,-6} 
		{
			\draw (-5,\y) -- (0,\y) -- (1.5,\y+1);	
			\node at (-5.5,\y) {$\dots$};
		}
		\foreach \x in {-5,-4,...,0} 
		{
			\draw (\x,-6) --(\x,0) -- (\x+1.5,1);
			\node at (\x,-6.3) {$\vdots$};	
		}
		\draw (0.75,-5.5)--(0.75,0.5) --(-4.25,0.5);
		\node at (0.75,-5.8) {$\vdots$};
		\node at (-4.75,0.5) {$\dots$};
		\draw (1.5,-5.)--(1.5,1) --(-3.5,1);
		\node at (1.5,-5.3) {$\vdots$};
		\node at (-4,1) {$\dots$};
		\foreach \y in {0,-4} {
			\node at (-0.5,\y) {$\mathbf{1}$};
			\node at (-1.5,\y) {$\mathbf{1}$};
			\node at (-2.5,\y) {$\mathbf{2}$};
			\node at (-3.5,\y) {$\mathbf{1}$};
			\node at (-4.5,\y) {$\mathbf{1}$};
			\node at (0.375,\y+0.25) {$\mathsf{7}$};
			\node at (0.375+0.75,\y+0.75)[green] {$\mathsf{0}$};
			\node at (0,\y-0.5) {$\mathit{2}$};
			\node at (-1,\y-0.5) {$\mathit{3}$};
			\node at (-2,\y-0.5) {$\mathit{4}$};
			\node at (-3,\y-0.5) {$\mathit{2}$};
			\node at (-4,\y-0.5) {$\mathit{3}$};
			\node at (-5,\y-0.5) {$\mathit{4}$};
			\node at (0.75,\y-0.5+0.5) [green]{$\mathit{0}$};
			\node at (1.5,\y-0.5+1)[green] {$\mathit{0}$};
		}
		\foreach \y in {-1,-5} {
			\node at (-0.5,\y) {$\mathbf{2}$};
			\node at (-1.5,\y) {$\mathbf{0}$};
			\node at (-2.5,\y) {$\mathbf{0}$};
			\node at (-3.5,\y) {$\mathbf{2}$};
			\node at (-4.5,\y) {$\mathbf{0}$};
			\node at (0.375,\y+0.25) {$\mathsf{9}$};
			\node at (0.375+0.75,\y+0.75)[green] {$\mathsf{0}$};
			\node at (0,\y-0.5) {$\mathit{3}$};
			\node at (-1,\y-0.5) {$\mathit{0}$};
			\node at (-2,\y-0.5) {$\mathit{1}$};
			\node at (-3,\y-0.5) {$\mathit{3}$};
			\node at (-4,\y-0.5) {$\mathit{0}$};
			\node at (-5,\y-0.5) {$\mathit{1}$};
			\node at (0.75,\y-0.5+0.5)[green] {$\mathit{0}$};
			\node at (1.5,\y-0.5+1) [green]{$\mathit{0}$};
		}
		\foreach \y in {-2} {
			\node at (-0.5,\y) {$\mathbf{1}$};
			\node at (-1.5,\y) {$\mathbf{1}$};
			\node at (-2.5,\y) {$\mathbf{0}$};
			\node at (-3.5,\y) {$\mathbf{1}$};
			\node at (-4.5,\y) {$\mathbf{1}$};
			\node at (0.375,\y+0.25) {$\mathsf{6}$};
			\node at (0.375+0.75,\y+0.75) [green]{$\mathsf{0}$};
			\node at (0,\y-0.5) {$\mathit{2}$};
			\node at (-1,\y-0.5) {$\mathit{1}$};
			\node at (-2,\y-0.5) {$\mathit{0}$};
			\node at (-3,\y-0.5) {$\mathit{2}$};
			\node at (-4,\y-0.5) {$\mathit{1}$};
			\node at (-5,\y-0.5) {$\mathit{0}$};
			\node at (0.75,\y-0.5+0.5)[green] {$\mathit{0}$};
			\node at (1.5,\y-0.5+1) [green]{$\mathit{0}$};
		}
		\foreach \y in {-3} {
			\node at (-0.5,\y) {$\mathbf{0}$};
			\node at (-1.5,\y) {$\mathbf{2}$};
			\node at (-2.5,\y) {$\mathbf{2}$};
			\node at (-3.5,\y) {$\mathbf{0}$};
			\node at (-4.5,\y) {$\mathbf{2}$};
			\node at (0.375,\y+0.25) {$\mathsf{4}$};
			\node at (0.375+0.75,\y+0.75) [green]{$\mathsf{0}$};
			\node at (0,\y-0.5) {$\mathit{1}$};
			\node at (-1,\y-0.5) {$\mathit{4}$};
			\node at (-2,\y-0.5) {$\mathit{3}$};
			\node at (-3,\y-0.5) {$\mathit{1}$};
			\node at (-4,\y-0.5) {$\mathit{4}$};
			\node at (-5,\y-0.5) {$\mathit{3}$};
			\node at (0.75,\y-0.5+0.5) [green] {$\mathit{0}$};
			\node at (1.5,\y-0.5+1)[green] {$\mathit{0}$};
		}
		\node at (-1+0.375,0.25) {$\mathsf{8}$};
		\node at (-2+0.375,0.25) {$\mathsf{11}$};
		\node at (-3+0.375,0.25) {$\mathsf{7}$};
		\node at (-4+0.375,0.25) {$\mathsf{8}$};
		\node at (-5+0.375,0.25) {$\mathsf{11}$};
		\node at (-1+0.375+0.75,0.75) [green]{$\mathsf{0}$};
		\node at (-2+0.375+0.75,0.75) [green]{$\mathsf{0}$};
		\node at (-3+0.375+0.75,0.75) [green]{$\mathsf{0}$};
		\node at (-4+0.375+0.75,0.75) [green]{$\mathsf{0}$};
		\node at (-5+0.375+0.75,0.75) [green]{$\mathsf{0}$};
		\node at (-1+0.75+0.5,0.5) [green]{$\mathbf{0}$};
		\node at (-2+0.75+0.5,0.5) [green]{$\mathbf{0}$};
		\node at (-3+0.75+0.5,0.5) [green]{$\mathsf{0}$};
		\node at (-4+0.75+0.5,0.5) [green]{$\mathbf{0}$};
		\node at (-5+0.75+0.5,0.5) [green]{$\mathbf{0}$};
		\node at (0,0) {$C$};
		\node at (-5,0) {$A'$};
		\node at (-1,0) {$A$};
		\end{tikzpicture}
	\hspace{1cm}
	\begin{tikzpicture}[scale=2,baseline={(current bounding box.center)}]
		\draw (0,0)--(1,0)--(1.75,0.5) --(1.75,1.5);
		\draw (0,0)--(0,1)--(0.75,1.5)--(1.75,1.5);
		\draw (1,0)--(1,1)--(1.75,1.5) (0,1)--(1,1);
		\draw[->] (0.9,0.1) --node [midway] {$\psi_{p,q}$} (0.1,0.9);
		\draw[->] (1.65,1.47) -- node [midway] {$\psi^t_{p,r}$}(0.1,1.03);
		\draw[->] (1.72,1.45)-- node [midway] {$\psi_{q,r}$}(1.05,0.1);
		\node at (1,1) {$C$};
		\node at (0,1) [left] {$A$};
	\end{tikzpicture}
	\end{center}
	\caption{\label{fig:example3513}Decomposition of $7/13$ in the \mrb $\ca{B}_{3,5,13}=\{3,5,13\}^{\setN^*}$. One observes a periodicity of length $3$ in the horizontal dimension associated to $p=3$ (bold digits) and a periodicity of length $4$ in the vertical dimension associated to $q=5$ (italic digits). In the third dimension associated to $13$, all the digits are zero beyond the first ones in the first layer. The periodic horizontal sequence $(\mathsf{7},\mathsf{8},\mathsf{11},\mathsf{7},\ldots)$ corresponds to the orbit of $7/13$ under the map $T_3$. The vertical sequence $(\mathsf{7},\mathsf{9},\mathsf{6},\mathsf{4},\mathsf{7},\ldots)$ corresponds to the orbit of $7/13$ under the map $T_5$. The cube corner on the right is seen from the vertex $C$, in comparison with figure~\ref{fig:cubeforYB} where it was seen from the point $A$. In particular, there is a transposition of the arguments, indicated by an exponent $t$ between the left and right edge decorations with respect to the arrow of the transformation.}
\end{figure}

Generating all decorations for $B(k_i/r)$ can be done in the following way. We consider as input the unique value $k_i$ on the edge between $(0,0,0)$ and $(0,0,-1)$ (the $\mathsf{7}$ on the figure \ref{fig:example3513}) and all the zero digits on the second layer, i.e. all the digits on the edges in the plane $\setZ\times\setZ\times \{-1\}$. This corresponds to the fact that the number $k_i/r$ can be written as $0,k_i0\ldots0_{\gr{b}})$ in \emph{all} the \mrbs with $\gr{b}=(r,\gr{b'})$ and $\gr{b'}\in\{p,q\}^{\setN^*}$. Seen from the point $A$, this corresponds to the following input configuration
\[
\begin{tikzpicture}[scale=1]
	\draw[dashed] (0,0) -- (0,1) -- (0.75,1.5);
	\draw (0.75,1.5)-- (1.75,1.5);
	\node at (0,0.5) [green] {$\mathit{0}$};
	\node at (0.375,1.25) [green] {$\mathbf{0}$};
	\node at (1.25,1.5) {$k_i$};
	\draw[dotted] (0,0) -- (1,0) -- (1.75,0.5) -- (1.75,1.5);
	\node at (0.5,0) {$?$};
	\node at (1.375,0.25) [blue]{$?$};
	\node at (1.75,1) [blue] {$?$};
	\node at (1.75,1.5)  {$C$};
	\node at (1,1) {$A$};
\end{tikzpicture}
\]
where the edges of the second layer appears as a dashed line and all digits $0$, $k_i$ and $\beta(\cdot,\cdot)$ are variables associated to edges. We now apply one of the hand side of the set-theoretical Yang-Baxter equation \eqref{eq:quantumYB} to this input configuration and obtain (all variables on the edges of the cube):
\begin{equation}\label{eq:YBlayer1}
\begin{tikzpicture}[scale=1.7,baseline={(current bounding box.center)}]
	\draw[dashed] (0,0) -- (0,1) -- (0.75,1.5) ;
	\draw (0.75,1.5) -- (1.75,1.5);
	\node at (0,0.5) [green] {$\mathit{0}$};
	\node at (0.375,1.25) [green] {$\mathbf{0}$};
	\node at (1.25,1.5) {$k_i$};
	\draw (0,0) -- (1,0) -- (1.75,0.5) -- (1.75,1.5);
	\node at (0.5,0) {$?$};
	\node at (1.375,0.25) [blue]{$?$};
	\node at (1.75,1) [blue] {$?$};
	\draw[dashed] (0,0) -- node [midway,green] {$\mathbf{0}$} (0.75,0.5) ;
	\draw (0.75,0.5)-- node [midway] {$?$}(1.75,0.5);
	\draw[dashed] (0.75,0.5) -- node [midway,green] {$\mathit{0}$} (0.75,1.5);
\end{tikzpicture}
\to 
\begin{tikzpicture}[scale=1.7,baseline={(current bounding box.center)}]
	\draw[dashed] (0,0) -- (0,1) -- (0.75,1.5);
	\draw (0.75,1.5) -- (1.75,1.5);
	\node at (0,0.5) [green] {$\mathit{0}$};
	\node at (0.375,1.25) [green] {$\mathbf{0}$};
	\node at (1.25,1.5) {$k_i$};
	\draw (0,0) -- (1,0) -- (1.75,0.5) -- (1.75,1.5);
	\node at (0.5,0) {$?$};
	\node at (1.375,0.25) [blue]{$?$};
	\node at (1.75,1) [blue] {{\scriptsize$\beta(0,-0.5)$}};
	\draw[dashed] (0,0) -- node [midway,green] {$\mathbf{0}$} (0.75,0.5) ;
	\draw (0.75,0.5)-- node [midway] {{\footnotesize$k_{\tau_q(i)}$}}(1.75,0.5);
	\draw[dashed] (0.75,0.5) -- node [midway,green] {$\mathit{0}$} (0.75,1.5);
\end{tikzpicture}
\to 
\begin{tikzpicture}[scale=1.7,baseline={(current bounding box.center)}]
	\draw[dashed] (0,0) -- (0,1) -- (0.75,1.5) ;
	\draw (0.75,1.5) -- (1.75,1.5);
	\node at (0,0.5) [green] {$\mathit{0}$};
	\node at (0.375,1.25) [green] {$\mathbf{0}$};
	\node at (1.25,1.5) {$k_i$};
	\draw (0,0) -- (1,0) -- (1.75,0.5) -- (1.75,1.5);
	\node at (0.5,0) {{\footnotesize$k_{\tau_p(\tau_q(i))}$}};
	\node at (1.375,0.25) [blue]{{\scriptsize$\beta(-0.5,-1)$}};
	\node at (1.75,1) [blue]{{\scriptsize$\beta(0,-0.5)$}};
	\draw[dashed] (0,0) -- node [midway,green] {$\mathbf{0}$} (0.75,0.5) ;
	\draw (0.75,0.5) -- node [midway] {{\footnotesize$k_{\tau_q(i)}$}}(1.75,0.5);
	\draw[dashed] (0.75,0.5) -- node [midway,green] {$\mathit{0}$} (0.75,1.5);
\end{tikzpicture}
\end{equation}
We first observe the important fact that, on the second layer only zero values are produced by the first $\psi_{q,p}$ map, which is compatible with our hypothesis of only zero digits on the second layer. We also observe that only elements of the orbit $\Orb_{p,q}(k_i/r)$ appear in the edges between the two layer. On the first layer, two decorations are generated, which are illustrated on figure~\ref{fig:example3513} as the two digits $\mathit{2}$ and $\mathbf{2}$ on the corner cube.

Starting from the same configuration, we may now apply the second handside of the Yang-Baxter equation \eqref{eq:quantumYB} and now produce the following sequence of fillings:
\begin{equation}\label{eq:YBlayer2}
\begin{tikzpicture}[scale=1.7,baseline={(current bounding box.center)}]
	\draw[dashed] (0,0) -- (0,1) -- (0.75,1.5) ;
	\draw (0.75,1.5) -- (1.75,1.5);
	\node at (0,0.5) [green] {$\mathit{0}$};
	\node at (0.375,1.25) [green] {$\mathbf{0}$};
	\node at (1.25,1.5) {$\mathsf{k_i}$};
	\draw (0,0) -- (1,0) -- (1.75,0.5) -- (1.75,1.5);
	\node at (0.5,0) {$?$};
	\node at (1.375,0.25) [blue]{$?$};
	\node at (1.75,1) [blue]{$?$};
	\draw (0,1)-- node [midway] {{\footnotesize$k_{\tau_p(i)}$}} (1,1);
	\draw (1,1)-- node [midway,blue] {{\scriptsize$\beta(-0.5,0)$}} (1.75,1.5);
	\draw (1,0)-- node [midway,blue] {$?$} (1,1);
\end{tikzpicture}
\to 
\begin{tikzpicture}[scale=1.7,baseline={(current bounding box.center)}]
	\draw[dashed] (0,0) -- (0,1) -- (0.75,1.5) ;
	\draw (0.75,1.5) -- (1.75,1.5);
	\node at (0,0.5) [green] {$\mathit{0}$};
	\node at (0.375,1.25) [green] {$\mathbf{0}$};
	\node at (1.25,1.5) {$\mathsf{k_i}$};
	\draw (0,0) -- (1,0) -- (1.75,0.5) -- (1.75,1.5);
	\node at (0.5,0) {{\footnotesize$k_{\tau_p(\tau_q(i))}$}};
	\node at (1.375,0.25) [blue] {$?$};
	\node at (1.75,1) [blue]{$?$};
	\draw (0,1)-- node [midway] {{\footnotesize$k_{\tau_p(i)}$}} (1,1);
	\draw (1,1)-- node [midway,blue] {{\scriptsize$\beta(-0.5,0)$}} (1.75,1.5);
	\draw (1,0)-- node [midway,blue] {{\scriptsize$\beta(-1,-0.5)$}} (1,1);
\end{tikzpicture}
\to 
\begin{tikzpicture}[scale=1.7,baseline={(current bounding box.center)}]
	\draw[dashed] (0,0) -- (0,1) -- (0.75,1.5) ;
	\draw (0.75,1.5) -- (1.75,1.5);
	\node at (0,0.5) [green] {$\mathit{0}$};
	\node at (0.375,1.25) [green] {$\mathbf{0}$};
	\node at (1.25,1.5) {$\mathsf{k_i}$};
	\draw (0,0) -- (1,0) -- (1.75,0.5) -- (1.75,1.5);
	\node at (0.5,0) {{\footnotesize$k_{\tau_p(\tau_q(i))}$}};
	\node at (1.375,0.25) [blue] {{\scriptsize$\beta(-0.5,-1)$}};
	\node at (1.75,1) [blue]{{\scriptsize$\beta(0,-0.5)$}};
	\draw (0,1)-- node [midway] {{\footnotesize$k_{\tau_p(i)}$}} (1,1);
	\draw (1,1)-- node [midway,blue] {{\scriptsize$\beta(-0.5,0)$}} (1.75,1.5);
	\draw (1,0)-- node [midway,blue] {{\scriptsize$\beta(-1,-0.5)$}} (1,1);
\end{tikzpicture}
\end{equation}
The two Yang-Baxter moves ensure that the second layer is consistent since zero digits are conserved and show that on the transverse edges elements of the orbits are obtained through the maps $\tau_p$ and $\tau_q$ and then consistent decorations are produced on the first layer.

A double recursion can then be applied to all the cubes between the first and second layers with $k_i$ replaced by another element in the orbit and the whole decoration $B(k_i/r)$ is generated on the first layer. There are now two key observations:
\begin{enumerate}
	\item if one starts with a random element $k_i$ chosen uniformly in the orbit, then any $k_{\tau_p^{\circ m}\tau_q^{\circ n}(i)}$ is again uniformly distributed in the orbit.
	\item the sub-decoration $\sigma_1^{\circ m}\sigma_2^{\circ n}(B(k_i/r))$ is generated by the same recursion which uses the same second layer and the starting value $k_{\tau_p^{\circ m}\tau_q^{\circ n}(i)}$.
\end{enumerate}
We now obtain that the atomic measures described previously are invariant as stated in conjecture~\ref{conj:Furstenberg}. We also remark that we may have started with $0$ not in the first layer but in the $l$-th layer: this would then correspond to orbits of $k'/r^l$ for some number $k'=k_1+rk_2+\ldots +k_{l}r^{l-1}$ with digits put on the edges between the layers.

\paragraph*{The case of the Lebesgue measure.}
The invariance of the Lebesgue measure can be described in the same way:
\begin{itemize}
	\item uniform and independent digits are chosen along a given path in the second layer (this corresponds to an element $x'\in[0,1)$) and the digits on the other edges of the second layer are then filled following section~\ref{sec:mixedradix};
	\item a uniform digit $k$ in some arbitrary basis parameter $r$ is chosen independently on the first transverse edge.
\end{itemize}
The same filling mechanism as before then ensures the generation of the whole decoration on the first layer. 

Together, such a choice corresponds to a number $x=k/r+x'/r$ for the whole octant. This number is again uniform in $[0,1)$ as it can be seen from simple probabilistic computations and thus the digits on the first layer have the same law as the digits on the second layer. It is interesting to see it also directly from the Yang-Baxter moves \eqref{eq:YBlayer1} and \eqref{eq:YBlayer2} of filling cubes described above: for each admissible law on the four digits of an elementary square of the second layer and the digit of the first transversal edge (the $k_i$ above), one obtains a new law on the four digits of the corresponding elementary square on the first layer. 

One also deduces that the shifted sub-decorations $\sigma_1^{\circ m}\sigma_2^{\circ n}(B(x))$ have the same probability distribution as the initial distribution $B(x)$, hence the invariance described in conjecture~\ref{conj:Furstenberg}.

One also observe that this mechanism can be started with an arbitrary basis parameter $r$ and from an arbitrary depth of layers with the same hypothesis on the lowest layer.

\paragraph*{Towards the Furstenberg conjecture.} We have seen that the two-dimensional view on quadrants of a number $x\in[0,1)$ using \mrbs rewrites $T_p$ and $T_q$ in terms of shifts. The additional Yang-Baxter structure provides a multi-layer approach to invariant measures: We have seen above that \emph{both} the finite-support atomic measures and the Lebesgue measure behave nicely between different layers: 
\begin{itemize}
\item in the first case, the first layer is inherited from a trivial second (or larger finite-rank) layer with zeroes everywhere and a uniform digit on a transersal edge;
\item in the second case, the first-layer Lebesgue measure is a fixed point for the layer-to-layer transformation.
\end{itemize}
The layer-to-layer transformation provides a potential starting point to the Furstenberg conjecture. Is this layer-to-layer transformation too simple to have other layer-to-layer-invariant shifts-invariant measures or does the layer point of view provides alternative infinite-layer constructions of shifts invariant measures ? 

The question is open for us now. It may be interesting to first recast Rudolph's entropy criterion in terms of the present layer-to-layer structure. In a second time, the Yang-Baxter moves are local transformation whereas the invariant measures on $[0,1)$ are global objects: it may be interesting to see how generic "global" measures on $[0,1)$ can be built out of local building blocks related to edges inside layers and between layers. Such an idea is presented below and will be explored in a further work.

\subsection{An alternative approach through the Markov property}

Localizing the information on the edges instead of faces through \mrbs provides a spatial Markov property on two-dimensional arrays labelled on the edges and thus the construction of \cite{Simon2023} can be used. The present situation corresponds to a face weight
\begin{equation}
	\begin{split}
	\MarkovWeight{R}_{p,q} : \{0,1,\ldots,p-1\}^2\times  \{0,1,\ldots,q-1\}^2 &\to\setR_+
	\\
	(u_\South,u_\North,u_\West,u_\East) & \mapsto \indic{u_\South+pu_\East = u_\West+qu_\North}
	\end{split}
\end{equation}
and variables on the edges are the digits in suitable \mrbs. As a finite state model of statistical mechanics, one may be interested in the free energy of the model as well as the definition of  associated infinite-volume Gibbs measures on the whole plane $\setZ^2$ with suitable translational invariance. Such infinite-volume Gibbs measures are constructed in \cite{Simon2023} through Kolmogorov's extension theorem applied to boundary conditions on finite rectangles submitted to suitable boundary conditions built out of operators. 

In the present case, the definitions up to morphisms introduced in \cite{Simon2023} have a set-theoretical equivalent formulation (in the same way as definitions of \cite{Simon2023} have a set-theoretical intepretation in the parameter space of \cite{Bodiot2023}). We may consider, for example on the North side, a set $E_\North$ and a map $A_{\North}: \{0,1,\ldots,p-1\}\times E_{\North} \to E_{\North}$ such that the following vertical product is an extended map:
\[
A'_\North=\begin{tikzpicture}[guillpart,yscale=1.5,xscale=1.5]
	\fill[guillfill] (0,0) rectangle (1,2);
	\draw[guillsep] (0,2)--(0,0)--(1,0)--(1,2) (0,1)--(1,1);
	\node at (0.5,0.5) {$\psi_{p,q}$};
	\node at (0.5,1.5) {$A_\North$};
\end{tikzpicture} :  \{0,1,\ldots,p-1\} \times (\{0,1,\ldots,q-1\}\times E_{\North}) \to (\{0,1,\ldots,q-1\}\times E_{\North}) 
\]
with the element-wise definition $A'_\North(u_\South, (v,a) ) = \left(\psi^{(1)}_{p,q}(u_\South,v),A_\North(\psi^{(2)}_{p,q}(u_\South,v),a)\right)$
The element $A_\North$ is then a \emph{fixed point up to morphism} of $\psi_{p,q}$ if there exists a map $S : \{0,1,\ldots,q-1\}\times E_{\North} \to E_{\North}$ such that $S \circ A'_{\North} = A_{\North} \circ (\id\times S)$. Using the layer structure introduced in section~\ref{sec:layer}, the previous Yang-Baxter construction naturally provides such elements. It would be interesting to see if the new degrees of liberty left open in \cite{Simon2023} may provide interesting invariant measures for the $T_p$ and $T_q$ dynamics.

\section*{Acknowledgements}

We thank warmly François Ledrappier for his interest in the present work, his encouragements to write down the present construction and for mentioning Furstenberg's conjecture to us. We also thank the two referees for their careful readings and very interesting questions and suggestions.

This research was partially funded by the Agence Nationale de la Recherche (ANR), grant ANR-24-CE40-7252 NASQI3D "New algebraic structures in quantum integrability: towards 3D".

\bibliographystyle{alphaurl}
\bibliography{biblio_basesYB}

\end{document}